\newcommand {\R} {{\mathbb{R}}}
\newcommand {\Z} {{\mathbb{Z}}}
\newcommand {\om} {{\omega}}
\newcommand {\la} {{\lambda}}
\newcommand {\indu} {\lambda}
\newcommand {\tmindu} {\indu}
\newcommand {\x} {\boldsymbol{x}}
\newcommand {\hx} {\boldsymbol{\hat{x}}}
\newcommand {\tx} {\tilde{\boldsymbol{x}}}
\newcommand {\y} {\boldsymbol{y}}
\newcommand {\X} {\mathbf{X}}
\newcommand {\tX} {\tilde{\mathbf{X}}}
\newcommand {\Y} {\mathbf{Y}}
\renewcommand {\a} {\boldsymbol{a}}
\newcommand {\h} {\boldsymbol{h}}
\newcommand {\hh} {\boldsymbol{\hat{h}}}
\newcommand {\g} {\boldsymbol{g}}
\newcommand {\hg} {\boldsymbol{\hat{g}}}
\renewcommand {\d} {\boldsymbol{d}}
\newcommand {\G} {\mathbf{G}}
\newcommand {\I} {\mathbf{I}}
\let\realtau\tau
\renewcommand {\tau} {\boldsymbol{\realtau}}
\let\realxi\xi
\renewcommand {\xi} {\boldsymbol{\realxi}}
\let\realvarepsilon\varepsilon
\renewcommand {\varepsilon} {\boldsymbol{\realvarepsilon}}
\newcommand {\tm}{{\mathrm{(t)}}}
\newcommand {\fr}{{\mathrm{(f)}}}
\newcommand {\conv} {\ast}
\newcommand {\tmwav} {\boldsymbol{\psi}}
\newcommand {\tmhwav} {\widehat{\tmwav}}
\newcommand {\tmlow} {\boldsymbol{\phi}}
\newcommand {\tmwavu} {\tmwav}
\newcommand {\tmhwavu} {\tmhwav}
\newcommand {\tmwavd} {\tmwav^{\tm}}
\newcommand {\tfwav} {\boldsymbol{\Psi}}
\newcommand {\tfwavtm} {\tmwavd}
\newcommand {\tfwavfr} {\tmwav^{\fr}}
\newcommand {\tfhwavfr} {\tmhwav^{\fr}}
\newcommand {\gu} {\g}
\newcommand {\hu} {\h}
\newcommand {\hgu} {\hg}
\newcommand {\hhu} {\hh}
\newcommand {\gd} {\g^\tm}
\newcommand {\hd} {\h^\tm}
\renewcommand {\S} {\mathbf{S}}
\newcommand {\tmS} {\S}
\newcommand {\tfS} {\S}
\newcommand {\MEL} {\mathbf{M}}
\newcommand {\imunit} {\mathrm{i}}
\newcommand {\euler} {\mathrm{e}}
\newcommand {\supp} {\mathrm{supp~}}
\newcommand {\diff} {\mathrm{d}}
\newtheorem{theorem}{Theorem}
\newtheorem{lemma}{Lemma}
\title{Joint Time-Frequency Scattering}
\author
{Joakim~And\'en, Vincent Lostanlen, and St\'ephane Mallat
\thanks{This work is supported by the ERC InvariantClass 320959.}
\thanks{J. And\'en is with the Flatiron Institute, New York, NY, USA (e-mail: janden@flatironinstitute.org).}
\thanks{V. Lostanlen is with the Cornell Lab of Ornithology, Cornell University, Ithaca, NY, USA and the Music and Audio Research Laboratory, New York University, New York, NY, USA (e-mail: vincent.lostanlen@nyu.edu).}
\thanks{S. Mallat is with the D\'epartement d'Informatique, Ecole Normale Sup\'erieure, Paris, France, the Coll\`ege de France, Paris, France, and the Flatiron Institute, New York, NY, USA (e-mail: mallat@di.ens.fr).}
}
\begin{document}

\pagestyle{headings}

\maketitle

\begin{abstract}
In time series classification and regression, signals are typically mapped into some intermediate representation used for constructing models.
Since the underlying task is often insensitive to time shifts, these representations are required to be time-shift invariant.
We introduce the joint time-frequency scattering transform, a time-shift invariant representation which characterizes the multiscale energy distribution of a signal in time and frequency.
It is computed through wavelet convolutions and modulus non-linearities and may therefore be implemented as a deep convolutional neural network whose filters are not learned but calculated from wavelets.
We consider the progression from mel-spectrograms to time scattering and joint time-frequency scattering transforms, illustrating the relationship between increased discriminability and refinements of convolutional network architectures.
The suitability of the joint time-frequency scattering transform for time-shift invariant characterization of time series is demonstrated through applications to chirp signals and audio synthesis experiments.
The proposed transform also obtains state-of-the-art results on several audio classification tasks, outperforming time scattering transforms and achieving accuracies comparable to those of fully learned networks.
\end{abstract}

\begin{IEEEkeywords}
Acoustic signal processing,
continuous wavelet transform,
convolutional neural networks,
supervised learning.
\end{IEEEkeywords}

\section{Introduction}
\label{sec:intro}

To extract information from signals, we typically map them into a lower-dimensional representation space where we construct model.
The suitability of these representations depends on their ability to capture signal structure relevant to the task in question, such as classification or regression.
For time series, this often includes the signal's time-frequency geometry.
Figure \ref{fig:scal} shows a time-frequency decomposition, the wavelet transform, applied to two audio recordings.
Both are recordings of a person laughing, so their time-frequency structure is similar, but they also exhibit significant variability.
We would like to construct representations invariant to this type of variability but which adequately capture the time-frequency structure of the signals.

An especially important form of variability is time-shifting (and time-warping deformations).
Indeed, many time series classification and regression tasks are invariant to these transformations.
This work will therefore study representations that are time-shift invariant.

Initial work on audio classification computed representations from time-frequency decompositions, such windowed Fourier transforms.
These include mel-spectrograms, mel-frequency cepstral coefficients (MFCCs) \cite{davis-mermelstein}, modulation spectrograms \cite{hermansky-modulation,thompson-atlas} and correlograms \cite{correlogram,patterson-auditory}.
More recent work employs deep convolutional neural networks---cascades of filter banks alternated with nonlinearities \cite{lecun1998gradient,schmidhuber2015deep,lecun2015deep}.
Filters are learned from data, so each network is adapted to then task, often resulting in excellent performance \cite{graves}.
However, learning typically requires large training sets and extensive computational resources.

This work provides a bridge between traditional time-frequency representations and deep convolutional neural networks.
In particular, we implement the mel-spectrogram as a convolutional network and extend it by adding certain filters to that network which increase its discriminative power while maintaining the amount of time-shift invariance.
These filters are not learned but fixed according to the invariance and discriminability needs of the task.
This simplifies analysis and interpretation of the network.
Fixed filters also reduces the associated computational burden since no training is necessary.

\begin{figure}
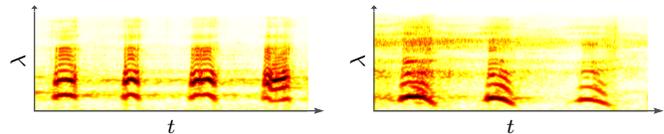

\begin{center}
\input{scal_ex_001_gplt}%
\hspace{0.2cm}%
\input{scal_ex_002_gplt}%
\end{center}
\caption{\label{fig:scal}
The wavelet transform amplitudes, or scalograms, of two recordings as a function of time $t$ and log-frequency $\la$.
Both recordings are of one person laughing.
}
\end{figure}

A convolutional network cascades convolutions, subsampling operators, and pointwise nonlinearities (such as rectifiers) \cite{cnn,relu}.
Its convolution kernels, or filters, are optimized over a training set.
Section \ref{sec:wave} describes how the wavelet transform is computed by a similar cascade of convolutions, but with fixed filters.
A wavelet transform is thus a convolutional network with filters specified by certain time-frequency topology.

To impose time-shift invariance, we compute the modulus of the wavelet transform, known as the scalogram, and average in time.
As shown in Section \ref{sec:mel}, this yields a variant of the popular mel-spectrogram.

Although powerful, mel-spectrograms do not capture large-scale temporal structure, such as amplitude modulation.
In Section \ref{sec:time-scatt}, the time scattering transform extends the mel-spectrogram through multiscale modulation coefficients \cite{stephane,dss}.
Instead of averaging the scalogram, it applies a second wavelet transform in time, takes the modulus, and averages.
This representation is more discriminative and performs well for several classification tasks \cite{dss,emb,talmon2015manifold,sulam2017dynamical}.
Extending the wavelet transform network now lets us implement both mel-spectrograms and time scattering as convolutional networks.

A significant limitation of the time scattering transform is its restriction to convolutions along the time axis.
In other words, its convolutional network is actually a tree, with each node having only a single parent.
A consequence is that time scattering cannot separate signals subjected to time shifts which vary in frequency, which is shown in Section \ref{sec:tf-loss}.
To remedy this, we must capture time and frequency structure jointly.

With this goal in mind, we introduce the joint time-frequency scattering transform.
As described in Section \ref{sec:tf-repr}, it replaces the one-dimensional, channel-by-channel wavelet decomposition of the scalogram by a two-dimensional wavelet transform.
Its construction is inspired by the cortical transform of Shamma et al. \cite{shihab,mesgarani}, which provides neurophysiological models of auditory processing in the mammalian brain.
The corresponding joint scattering network introduces additional filters into the time scattering network, breaking its tree structure and increasing its discriminative power.
To illustrate this, Section \ref{sec:freq-mod} shows how the joint scattering transform captures the chirp rate of frequency-modulated excitations.

The representational power of the proposed transform is further demonstrated in Section \ref{sec:recon} through synthesis experiments.
Here, a signal is synthesized from a target scattering transform by minimizing the distance of its transform to that target.
The resulting synthesized signals show how certain structures which are not captured by the mel-spectrogram and time scattering are better characterized by the joint scattering transform.

Section \ref{sec:classif} concludes by evaluating the joint time-frequency scattering transform on several audio classification tasks.
These include classification of phone segments, musical instruments, and acoustic scenes.
The joint transform outperforms the mel-spectrogram and time scattering while achieving results comparable to, or better than, state-of-the-art convolutional networks.
All figures and tables may be reproduced using software available at \url{http://www.di.ens.fr/data/software/}.

\section{Time-Shift Invariant Representations}
\label{sec:time-freq-repr}

Section \ref{sec:wave} defines the wavelet transform, a representation well suited for time series with multiscale structure.
The modulus of the wavelet transform, known as the scalogram, is averaged in time to yield the time-shift invariant mel-spectrogram, as described in Section \ref{sec:mel}.
Section \ref{sec:time-scatt} reviews the time scattering transform, introduced in And\'en and Mallat \cite{dss}, which extends the invariant mel-spectrogram.
Instead of just averaging the scalogram, it also applies a second wavelet transform, demodulates, and averages the result in time.
These representations are cascades of convolutions and non-linearities and may thus be implemented as deep convolutional networks with fixed filters.

\subsection{Wavelet Transform Filter Bank}
\label{sec:wave}

The wavelet transform of a signal is obtained by convolving it with a set of dilated bandpass filters known as wavelets.
It captures both short, transient structures and long-range oscillations in a localized manner.
In the frequency domain, the ratio between center frequency and bandwidth, the Q factor, is the same for all filters.
These transforms are therefore constant-Q transforms \cite{constant-Q}.
Wavelet filter banks provide good models for cochlear function in mammals \cite{dau,shihab,mesgarani,lewicki} and form the basis for many audio representations \cite{stephane-book}.
The transform may be computed using a multirate filter bank, as has been described in several works \cite{stephane-book,daubechies}.

\begin{figure}
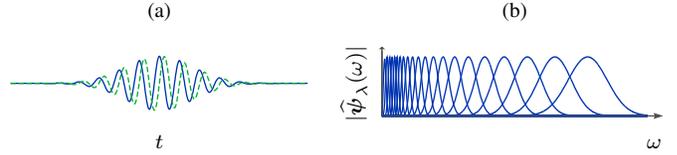

\begin{center}
\input{sample_morlets_1d_time_gplt}%
\hspace{0.2cm}%
\input{sample_morlets_1d_freq_gplt}%
\end{center}
\caption{
\label{fig:morlet}
(a) Real and imaginary parts of a Morlet mother wavelet with $Q = 4$.
(b) The wavelet filters in the frequency domain.
}
\end{figure}

Let $\x(t)$ be a continuous signal for $t \in \R$.
Its Fourier transform is given by
$\hx(\om) =
\int_\R \x(t) \euler^{-2\pi \imunit \omega t}\,\diff t$
for $\om \in \R$.
Following And\'en and Mallat \cite{dss}, we consider a complex analytic wavelet $\tmwavu(t)$ whose Fourier transform $\tmhwav (\om)$ is concentrated in the interval $[2^{-1/Q}, 1]$ for some $Q \ge 1$.
Dilating $\tmwavu(t)$ by factors $2^{-\la}$ now yields the wavelet filter bank
\begin{equation}
\label{eq:dilawave}
\tmwavu_\la (t) = 2^{\la} \tmwavu(2^{\la} t)
\quad \Longleftrightarrow \quad
\tmhwavu_\la (\om) = \tmhwav(2^{-\la} \om)~,
\end{equation}
for $\la \in \R$.
Consequently, $\tmhwavu_\la(\om)$ is concentrated in $[2^{\la-1/Q}, 2^{\la}]$.
This interval has approximate center $2^\la$ and bandwidth $2^\la / Q$.
We therefore need $Q$ filters to cover an octave, independent of frequency.
Since $\tmhwavu_\la(\om)$ is concentrated around $2^\la$, we refer to $\la$ as the wavelet's log-frequency index.

We are typically interested only in structures shorter than some fixed time scale $T$.
In time, $\tmwavu_\la(t)$ has approximate duration $2^{-\la} Q$.
We therefore require $\la$ to satisfy $2^{-\la} Q \le T$.
Unfortunately, certain low frequencies are then not covered by any wavelet.
For audio signals, these frequencies typically contain a small amount of energy and may be safely ignored.
In the following, we instead add a set of constant-bandwidth filters covering these frequencies (see And\'{e}n and Mallat \cite{dss}).

In numerical experiments, we use the Morlet wavelet due to its near-optimal time-frequency localization \cite{stephane-book,dss}.
Figure \ref{fig:morlet} shows a sample Morlet wavelet and its wavelet filter bank.

We now define the continuous wavelet transform of $\x(t)$ as
\begin{equation}
\label{eq:cont-wav}
\x \conv \tmwavu_\la (t)
\end{equation}
for $\la$ such that $2^{-\la} Q \le T$.
It captures the local oscillations of $\x(t)$ at time $t$ and frequency $2^{\la}$ with resolution $2^{-\la} Q$ and $2^{\la} /Q$ in time and frequency, respectively.
In audio applications, we typically set $Q \approx 8$ to better resolve oscillatory components.

\begin{figure}
\begin{tikzpicture}
[
>={Stealth[width=3mm,length=1.5mm]},
font=\small,
signal/.style={draw=none,minimum size=0.2cm},
op/.style={rectangle,draw,minimum height=0.5cm,minimum width=0.95cm,inner sep=0},
missing/.style={draw=none,minimum size=0.2cm},
aggreg/.style=ultra thick
]


\node [signal] (input) at (0, 0) {$\x$};


\node [op] (first-1) [right=0.4cm of input] {$\gu_0$};
\node [op] (first-Q) [below=0.9cm of first-1] {$\gu_{Q-1}$};
\node [missing] (first-missing) at ($(first-1)!0.5!(first-Q)$) {$\vdots$};
\node [op] (first-K) [below=0.9cm of first-Q] {$\hu$};

\foreach \lab in {1, Q}
    \draw [->] (input) -- ++(0.4cm, 0) |- (first-\lab.west);
\draw [->] (input.south) |- (first-K.west);

\node (first-j) [above=0.3cm of first-1] {$j = 1$};


\node [op] (second-K-1) [right=0.9cm of first-K] {$\gu_0$};
\node [op] (second-K-Q) [below=0.9cm of second-K-1] {$\gu_{Q-1}$};
\node [missing] (second-K-missing) at ($(second-K-1)!0.5!(second-K-Q)$) {$\vdots$};
\node [op] (second-K-K) [below=0.9cm of second-K-Q] {$\hu$};

\foreach \labtwo in {1, Q, K}
    \draw [->] (first-K.east) -- ++(0.6cm, 0) |- (second-K-\labtwo.west);

\node (second-j) [above=0cm of second-K-1 |- first-j,anchor=center] {$j = 2$};


\node [op] (third-K-K-1) [right=0.9cm of second-K-K] {$\gu_0$};
\node [op] (third-K-K-Q) [below=1cm of third-K-K-1] {$\gu_{Q-1}$};
\node [missing] (third-K-K-missing) at ($(third-K-K-1)!0.5!(third-K-K-Q)$) {$\vdots$};
\node [op] (third-K-K-K) [below=0.25cm of third-K-K-Q] {$\hu$};

\foreach \labthree in {1, Q, K}
    \draw [->] (second-K-K.east) -- ++(0.6cm, 0) |- (third-K-K-\labthree.west);

\node (third-j) [above=0cm of third-K-K-1 |- first-j,anchor=center] {$j = 3$};


\node [op] (fourth-K-K-K-1) [right=0.9cm of third-K-K-K] {$\gu_0$};
\node [op] (fourth-K-K-K-Q) [below=1cm of fourth-K-K-K-1] {$\gu_{Q-1}$};
\node [missing] (fourth-K-K-K-missing) at ($(fourth-K-K-K-1)!0.5!(fourth-K-K-K-Q)$) {$\vdots$};
\node [op] (fourth-K-K-K-K) [below=0.25cm of fourth-K-K-K-Q] {$\hu$};

\foreach \labfour in {1, Q, K}
    \draw [->] (third-K-K-K.east) -- ++(0.6cm, 0) |- (fourth-K-K-K-\labfour.west);

\node (fourth-j) [above=0cm of fourth-K-K-K-1 |- first-j,anchor=center] {$j = 4$};


\foreach \lab in {first-1, first-Q, second-K-1, second-K-Q, third-K-K-1, third-K-K-Q, fourth-K-K-K-1, fourth-K-K-K-Q, fourth-K-K-K-K}
    \draw [->] (\lab) -- ++(0.9cm,0) node [anchor=west] {};

\end{tikzpicture}
\caption{
\label{fig:wav-conv-net}
Multirate filter bank computing wavelet coefficients for $J = 4$.
Each block corresponds to a filter convolution subsampled by $2$ where a boxed $\hu$ is a low-pass filter and a boxed $\gu_k$ is a band-pass filter.
The depth corresponds to the octave index $j$ while $k = 0, \ldots, Q-1$ is the suboctave index.
}
\end{figure}
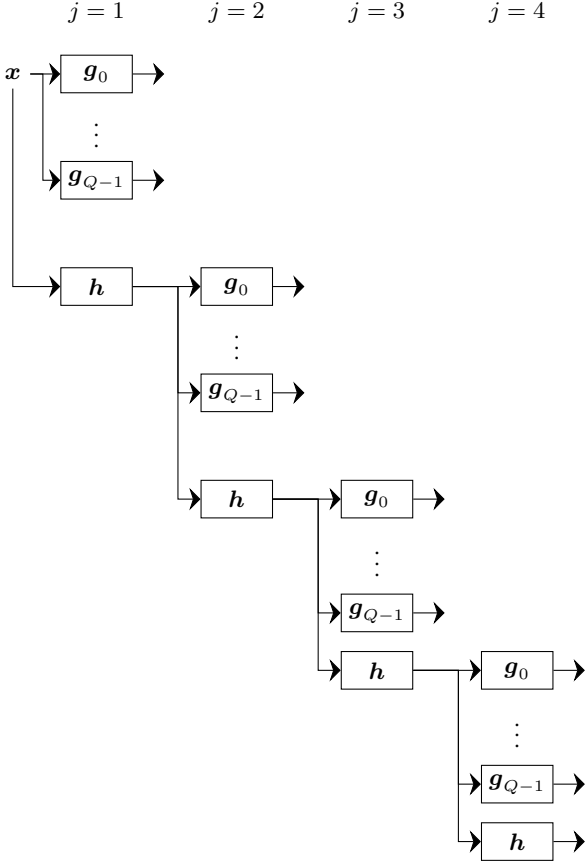

Now let $\x[n]$ be a discrete signal for $n \in \Z$.
Its discrete-time Fourier transform is $\hx(\om) = \sum_{n \in \Z} \x[n]\, \euler^{-2\pi \imunit t \om}$ for $\om \in [-1/2, 1/2]$.
We now define a discrete analog of the continuous wavelet transform \eqref{eq:cont-wav}, implemented as a multirate filter bank.

To achieve this, we consider the multiresolution pyramid obtained by averaging $\x[n]$ at different scales $2^j$.
We initialize the finest scale to $\a_0[n] = \x[n]$.
For $j > 0$, $\a_{j}[n]$ is obtained from $\a_{j-1}[n]$ through convolution by a lowpass filter $\hu[n]$ whose transfer function $\hhu(\om)$ is concentrated in $[-1/4, 1/4]$.
We then subsample by $2$ to obtain
\begin{equation}
\label{eq:hsub}
\a_{j}[n] = \a_{j-1} \conv \hu[2n]~.
\end{equation}
Note that $\a_j[n] = \x \conv \hu_j[2^{j} n]$ for some filter $\hu_j[n]$ defined by
\begin{equation*}
\hhu_j(\om) =
\prod_{p=0}^{j-1} \hhu (2^p \om).
\end{equation*}
As a result, $\hhu_j(\om)$ is concentrated in $[ -2^{-j-1} , 2^{-j-1} ]$ and $\hu_j[n]$ has approximate duration $2^{j+1}$.

The high frequencies of $\a_{j-1}[n]$ lost when convolving with $\h[n]$ are captured by $Q$ bandpass filters $\gu_0[n], \ldots, \gu_{Q-1}[n]$.
Each has a transfer function $\hgu_k(\om)$ concentrated in $[2^{-(k+1)/Q-1}, 2^{-k/Q-1}]$.
After convolving $\a_{j-1}[n]$ with $\gu_k[n]$, the result is subsampled by $2$, yielding
\begin{equation}
\label{eq:gsub}
\d_{j,k}[n] =
\a_{j-1} \conv \gu_k[2n],
\end{equation}
for $j > 0$ and $0 \le k < Q$.
One may verify that
\begin{equation}
\label{eq:discrete-wav}
\d_{j,k}[n] =
\x \conv \gu_{j,k}[2^{j} n],
\end{equation}
where $\hgu_{j,k}(\om) = \hhu_{j-1}(\om)\, \hgu_k (2^{j-1} \om)$.
These filters are concentrated in intervals $[2^{-j-(k+1)/Q}, 2^{-j-k/Q}]$.
In time, they have approximate duration $2^{j} Q$.
Since we are only concerned with local variability below time scale $T$, we require $2^{j} Q \le T$.
This specifies the maximum depth $J = \log_2 (T/Q)$ of the cascade.

Figure \ref{fig:wav-conv-net} illustrates this multirate filterbank cascade.
Each box corresponds to a convolution and subsampling by $2$ according to \eqref{eq:hsub} or \eqref{eq:gsub}.
First, $\x[n]$ is convolved with $\gu_0[n], \ldots, \gu_{Q-1}[n]$ and subsampled to yield the highest octave of bandpass coefficients $\d_{1,0}[n], \ldots, \d_{1,Q-1}[n]$.
Convolving $\x[n]$ with $\hu[n]$ and subsampling provides the remaining low frequencies, and the process is repeated.
As we progress through this cascade, the depth corresponds to the octave index $j$.

Combining the bandpass outputs yields the discrete wavelet transform in \eqref{eq:discrete-wav} for $1 \le j \le J$ and $0 \le k < Q$.
This is similar to the output of the continuous wavelet transform.
Indeed, if we sample a continuous band-limited signal $\x(t)$ at unit intervals, its discrete wavelet transform \eqref{eq:discrete-wav} approximates the continuous transform \eqref{eq:cont-wav} for $\la = -j-k/Q \le -1$ provided that $\hgu_{j,k}(\om) \approx \tmwavu_\la(\om)$.
Given the mother wavelet $\tmwav(t)$, it is possible to construct filters $\hu[n]$ and $\gu_0[n], \ldots, \gu_{Q-1}[n]$ such that this correspondence holds for large $j$ \cite{stephane-book}.
The result is an approximation of the continuous wavelet transform using the convolutional network illustrated in Figure \ref{fig:wav-conv-net}.

\subsection{Mel-Spectrogram}
\label{sec:mel}

The lack of time-shift invariance of the wavelet transform hinders its generalization power for classification.
For most classification tasks, shifting a signal in time does not modify its class.
To reduce variability when constructing models, the signal representation must therefore be made time-shift invariant.
In And\'en and Mallat \cite{dss}, this is achieved by computing the modulus and applying a lowpass filter.
Let us review this construction and study how this may be implemented in a multirate filterbank cascade.

The amplitude of the wavelet transform is the scalogram:
\begin{equation}
\X(t,\lambda) = |\x \conv \tmwavu_\la(t)|.
\end{equation}
Figure \ref{fig:scal} shows two sample scalograms.
Since the wavelets are analytic, applying the complex modulus performs a Hilbert demodulation, capturing the temporal envelope of each subband.
The scalogram $\X(t, \la)$ therefore describes the time-frequency intensity of $\x(t)$ at time $t$ and log-frequency $\la$.

Unfortunately, the scalogram is not time-shift invariant.
Indeed, shifting a signal $\x(t) \mapsto \x_c(t) = \x(t-c)$ also shifts its scalogram $\X(t, \la) \mapsto \X_c(t, \la) = \X(t-c, \la)$.
To ensure invariance, we average in time to obtain
\begin{equation}
\label{eq:mel}
\MEL \x(t, \la) =
\X(\cdot, \la) \conv \tmlow_T(t) =
|\x \conv \tmwavu_\la | \conv \tmlow_T(t),
\end{equation}
where $\tmlow_T(t) = T^{-1} \tmlow(T^{-1} t)$ for some lowpass filter $\tmlow(t)$ of duration $1$, so $\tmlow_T(t)$ has duration $T$.
This is the mel-spectrogram $\MEL \x(t, \la)$ of $\x(t)$.
For $|c| \ll T$, it satisfies $\MEL \x_c(t, \la) \approx \MEL \x(t, \la)$, so it is locally invariant to time-shifts.
The underlying wavelet structure of the mel-spectrogram also ensures stability to time-warping deformations \cite{dss}.

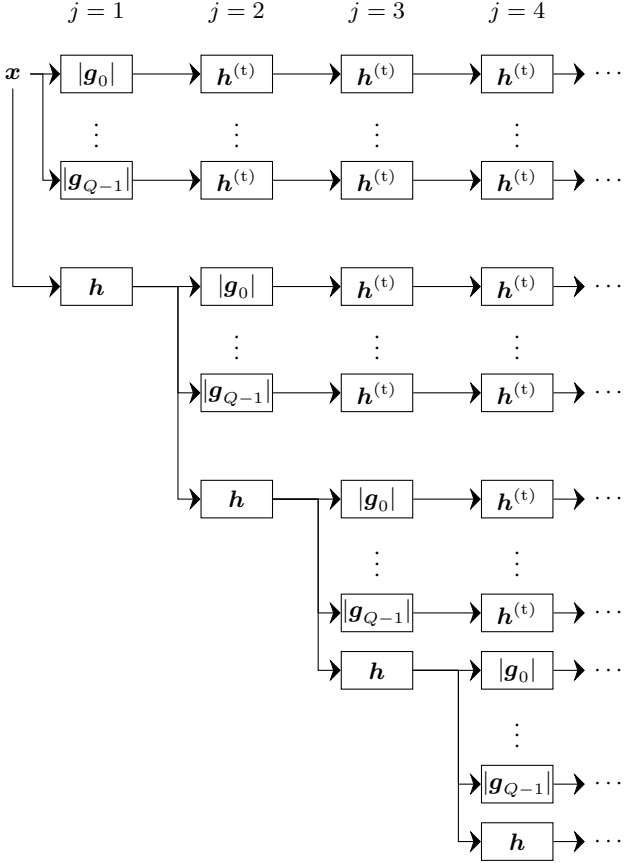
\begin{figure}
\begin{tikzpicture}
[
>={Stealth[width=3mm,length=1.5mm]},
font=\small,
signal/.style={draw=none,minimum size=0.2cm},
op/.style={rectangle,draw,minimum height=0.5cm,minimum width=0.95cm,inner sep=0},
missing/.style={draw=none,minimum size=0.2cm},
aggreg/.style=ultra thick
]


\node [signal] (input) at (0, 0) {$\x$};


\node [op] (first-1) [right=0.4cm of input] {$|\gu_0|$};
\node [op] (first-Q) [below=0.9cm of first-1] {$|\gu_{Q-1}|$};
\node [missing] (first-missing) at ($(first-1)!0.5!(first-Q)$) {$\vdots$};
\node [op] (first-K) [below=0.9cm of first-Q] {$\hu$};

\foreach \lab in {1, Q}
    \draw [->] (input) -- ++(0.4cm, 0) |- (first-\lab.west);
\draw [->] (input.south) |- (first-K.west);

\node (first-j) [above=0.3cm of first-1] {$j = 1$};


\node [op] (second-1) [right=0.9cm of first-1] {$\hd$};
\node [op] (second-Q) [right=0.9cm of first-Q] {$\hd$};
\node [missing] (second-missing) at ($(second-1)!0.5!(second-Q)$) {$\vdots$};

\node [op] (second-K-1) [right=0.9cm of first-K] {$|\gu_0|$};
\node [op] (second-K-Q) [below=0.9cm of second-K-1] {$|\gu_{Q-1}|$};
\node [missing] (second-K-missing) at ($(second-K-1)!0.5!(second-K-Q)$) {$\vdots$};
\node [op] (second-K-K) [below=0.9cm of second-K-Q] {$\hu$};

\foreach \labone in {1, Q}
    \draw [->] (first-\labone.east) -- (second-\labone.west);

\foreach \labtwo in {1, Q, K}
    \draw [->] (first-K.east) -- ++(0.6cm, 0) |- (second-K-\labtwo.west);

\node (second-j) [above=0cm of second-K-1 |- first-j,anchor=center] {$j = 2$};


\node [op] (third-1) [right=0.9cm of second-1] {$\hd$};
\node [op] (third-Q) [right=0.9cm of second-Q] {$\hd$};
\node [missing] (third-missing) at ($(third-1)!0.5!(third-Q)$) {$\vdots$};

\node [op] (third-K-1) [right=0.9cm of second-K-1] {$\hd$};
\node [op] (third-K-Q) [right=0.9cm of second-K-Q] {$\hd$};
\node [missing] (third-K-missing) at ($(third-K-1)!0.5!(third-K-Q)$) {$\vdots$};

\node [op] (third-K-K-1) [right=0.9cm of second-K-K] {$|\gu_0|$};
\node [op] (third-K-K-Q) [below=1cm of third-K-K-1] {$|\gu_{Q-1}|$};
\node [missing] (third-K-K-missing) at ($(third-K-K-1)!0.5!(third-K-K-Q)$) {$\vdots$};
\node [op] (third-K-K-K) [below=0.25cm of third-K-K-Q] {$\hu$};

\foreach \labone in {1, Q, K-1, K-Q}
    \draw [->] (second-\labone.east) -- (third-\labone.west);

\foreach \labthree in {1, Q, K}
    \draw [->] (second-K-K.east) -- ++(0.6cm, 0) |- (third-K-K-\labthree.west);

\node (third-j) [above=0cm of third-K-K-1 |- first-j,anchor=center] {$j = 3$};


\node [op] (fourth-1) [right=0.9cm of third-1] {$\hd$};
\node [op] (fourth-Q) [right=0.9cm of third-Q] {$\hd$};
\node [missing] (fourth-missing) at ($(fourth-1)!0.5!(fourth-Q)$) {$\vdots$};

\node [op] (fourth-K-1) [right=0.9cm of third-K-1] {$\hd$};
\node [op] (fourth-K-Q) [right=0.9cm of third-K-Q] {$\hd$};
\node [missing] (fourth-K-missing) at ($(fourth-K-1)!0.5!(fourth-K-Q)$) {$\vdots$};

\node [op] (fourth-K-K-1) [right=0.9cm of third-K-K-1] {$\hd$};
\node [op] (fourth-K-K-Q) [right=0.9cm of third-K-K-Q] {$\hd$};
\node [missing] (fourth-K-K-missing) at ($(fourth-K-K-1)!0.5!(fourth-K-K-Q)$) {$\vdots$};

\node [op] (fourth-K-K-K-1) [right=0.9cm of third-K-K-K] {$|\gu_0|$};
\node [op] (fourth-K-K-K-Q) [below=1cm of fourth-K-K-K-1] {$|\gu_{Q-1}|$};
\node [missing] (fourth-K-K-K-missing) at ($(fourth-K-K-K-1)!0.5!(fourth-K-K-K-Q)$) {$\vdots$};
\node [op] (fourth-K-K-K-K) [below=0.25cm of fourth-K-K-K-Q] {$\hu$};

\foreach \labone in {1, Q, K-1, K-Q, K-K-1, K-K-Q}
    \draw [->] (third-\labone.east) -- (fourth-\labone.west);

\foreach \labfour in {1, Q, K}
    \draw [->] (third-K-K-K.east) -- ++(0.6cm, 0) |- (fourth-K-K-K-\labfour.west);

\node (fourth-j) [above=0cm of fourth-K-K-K-1 |- first-j,anchor=center] {$j = 4$};


\foreach \lab in {1, Q, K-1, K-Q, K-K-1, K-K-Q, K-K-K-1, K-K-K-Q, K-K-K-K}
    \draw [->] (fourth-\lab) -- ++(0.9cm,0) node [anchor=west] {$\cdots$};

\end{tikzpicture}
\caption{
\label{fig:mel-freq-conv-net}
Mel-spectrogram implemented as a convolutional network.
Each $|\gu_k|$ block convolves by a band-pass filter $\gu_k[n]$, computes the modulus, and subsamples by $2$.
Blocks containing $\hu$ or $\hd$ convolve by a low-pass filter and subsample by $2$.
}
\end{figure}

The mel-spectrogram was originally introduced for speech classification \cite{davis-mermelstein} and was motivated by psychoacoustic studies.
It has since found widespread use in various audio classification tasks \cite{logan,somervuo-bird,esc50}.
Traditionally, the mel-spectrogram is computed through frequency averaging of the windowed Fourier transform amplitude, also known as the spectrogram.
However, it has recently been shown that they may be approximated by the time-averaged scalogram coefficients \eqref{eq:mel} \cite{dss,bammer-sampta,dorfler-convnet}.
Note that this formulation makes the time-shift invariance of the mel-spectrogram explicit.
Indeed, the amount of invariance is directly controlled by the duration $T$ of the lowpass filter $\tmlow_T(t)$.
We shall use this wavelet-based variant of the mel-spectrogram in the following.

We now define the discrete mel-spectrogram using the discrete wavelet transform.
The resulting convolutional network is shown in Figure \ref{fig:mel-freq-conv-net}.
Instead of just convolving by $\gu_k[n]$, this network also applies a modulus and subsamples by $2$.
The whole operation is denoted by a boxed $|\gu_k|$.
The result is then passed through a sequence of lowpass filters $\hd[n]$ alternated with subsampling operators, approximating the convolution by $\tmlow_T(t)$.
The output is $JQ+1$ signals of form $|\x \conv \gu_{j,k} | \conv \hd_{J-j} [2^J n]$, where $j$ is the depth at which the modulus was applied.
If the filters are chosen as in Section \ref{sec:wave}, this approximates $\MEL x(t, \la)$ for a bandpass $\x(t)$.

For real $\gu_k[n]$, we may replace the modulus with a rectified linear unit.
Indeed, averaging a rectified bandpass signal approximates its Hilbert envelope, so the result is similar \cite{papoulis}.

\begin{figure}
\begin{tikzpicture}
[
>={Stealth[width=3mm,length=1.5mm]},
font=\small,
signal/.style={draw=none,minimum size=0.2cm},
op/.style={rectangle,draw,minimum height=0.5cm,minimum width=0.95cm,inner sep=0},
missing/.style={draw=none,minimum size=0.2cm},
aggreg/.style=ultra thick
]


\node [signal] (input) at (0, 0) {$x$};


\node [op] (first-1) [right=0.4cm of input] {$|\gu_0|$};
\node [op] (first-Q) [below=2.5cm of first-1] {$|\gu_{Q-1}|$};
\node [missing] (first-missing) at ($(first-1)!0.5!(first-Q)$) {$\vdots$};
\node [op] (first-K) [below=2.0cm of first-Q] {$\hu$};

\foreach \lab in {1, Q}
    \draw [->] (input) -- ++(0.4cm, 0) |- (first-\lab.west);
\draw [->] (input.south) |- (first-K.west);

\node (first-j) [above=0.3cm of first-1] {$j = 1$};


\node [op] (second-1-1) [right=0.9cm of first-1] {$|\gd|$};
\node [op] (second-1-K) [below=0.25cm of second-1-1] {$\hd$};

\node [op] (second-Q-1) [right=0.9cm of first-Q] {$|\gd|$};
\node [op] (second-Q-K) [below=0.25cm of second-Q-1] {$\hd$};

\node [op] (second-K-1) [right=0.9cm of first-K] {$|\gu_0|$};
\node [op] (second-K-Q) [below=1.5cm of second-K-1] {$|\gu_{Q-1}|$};
\node [missing] (second-K-missing) at ($(second-K-1)!0.5!(second-K-Q)$) {$\vdots$};
\node [op] (second-K-K) [below=1.0cm of second-K-Q] {$\hu$};

\foreach \labone in {1, Q}
{
    \foreach \labtwo in {1, K}
        \draw [->] (first-\labone.east) -- ++(0.6cm, 0) |- (second-\labone-\labtwo.west);
}

\foreach \labtwo in {1, Q, K}
    \draw [->] (first-K.east) -- ++(0.6cm, 0) |- (second-K-\labtwo.west);

\node (second-j) [above=0cm of second-K-1 |- first-j,anchor=center] {$j = 2$};


\node [op] (third-1-1) [right=0.9cm of second-1-1] {$\hd$};

\node [op] (third-1-K-1) [right=0.9cm of second-1-K] {$|\gd|$};
\node [op] (third-1-K-K) [below=0.25cm of third-1-K-1] {$\hd$};

\node [op] (third-Q-1) [right=0.9cm of second-Q-1] {$\hd$};
\node [op] (third-Q-K-1) [right=0.9cm of second-Q-K] {$|\gd|$};
\node [op] (third-Q-K-K) [below=0.25cm of third-Q-K-1] {$\hd$};

\node [missing] (third-missing) at ($(third-1-K-K)!0.5!(third-Q-1)$) {$\vdots$};

\node [op] (third-K-1-1) [right=0.9cm of second-K-1] {$|\gd|$};
\node [op] (third-K-1-K) [below=0.25cm of third-K-1-1] {$\hd$};

\node [op] (third-K-Q-1) [right=0.9cm of second-K-Q] {$|\gd|$};
\node [op] (third-K-Q-K) [below=0.25cm of third-K-Q-1] {$\hd$};

\node [op] (third-K-K-1) [right=0.9cm of second-K-K] {$|\gu_0|$};
\node [op] (third-K-K-Q) [below=1cm of third-K-K-1] {$|\gu_{Q-1}|$};
\node [op] (third-K-K-K) [below=0.25cm of third-K-K-Q] {$\hu$};

\node [missing] (third-K-K-missing) at ($(third-K-K-1)!0.5!(third-K-K-Q)$) {$\vdots$};

\foreach \labone in {1, Q}
    \foreach \labtwo in {1}
        \draw [->] (second-\labone-\labtwo.east) -- (third-\labone-\labtwo.west);

\foreach \labone in {1, Q}
    \foreach \labthree in {1, K}
        \draw [->] (second-\labone-K.east) -- ++(0.6cm, 0) |- (third-\labone-K-\labthree.west);

\foreach \labtwo in {1, Q, K}
    \foreach \labthree in {1, K}
        \draw [->] (second-K-\labtwo.east) -- ++(0.6cm, 0) |- (third-K-\labtwo-\labthree.west);

\node (third-j) [above=0cm of third-K-K-1 |- first-j,anchor=center] {$j = 3$};


\foreach \lab in {1-1, 1-K-1, 1-K-K, Q-1, Q-K-1, Q-K-K, K-1-1, K-1-K, K-Q-1, K-Q-K, K-K-1, K-K-Q, K-K-K}
    \draw [->] (third-\lab) -- ++(0.9cm,0) node [anchor=west] {$\cdots$};

\draw[decorate,decoration=brace,thick] ($(third-1-1)+(1.5cm,0.25cm)$) -- node [midway,right] {\small $2$nd order} ($(third-1-K-1)+(1.5cm,-0.25cm)$);

\draw[decorate,decoration=brace,thick] ($(third-Q-1)+(1.5cm,0.25cm)$) -- node [midway,right] {\small $2$nd order} ($(third-Q-K-1)+(1.5cm,-0.25cm)$);

\draw[decorate,decoration=brace,thick] ($(third-K-1-1)+(1.5cm,0.25cm)$) -- node [midway,right] {\small $2$nd order} ($(third-K-1-1)+(1.5cm,-0.25cm)$);

\draw[decorate,decoration=brace,thick] ($(third-K-Q-1)+(1.5cm,0.25cm)$) -- node [midway,right] {\small $2$nd order} ($(third-K-Q-1)+(1.5cm,-0.25cm)$);

\draw[decorate,decoration=brace,thick] ($(third-1-K-K)+(1.5cm,0.25cm)$) -- node [midway,right] {\small $1$st order} ($(third-1-K-K)+(1.5cm,-0.25cm)$);

\draw[decorate,decoration=brace,thick] ($(third-Q-K-K)+(1.5cm,0.25cm)$) -- node [midway,right] {\small $1$st order} ($(third-Q-K-K)+(1.5cm,-0.25cm)$);

\draw[decorate,decoration=brace,thick] ($(third-K-1-K)+(1.5cm,0.25cm)$) -- node [midway,right] {\small $1$st order} ($(third-K-1-K)+(1.5cm,-0.25cm)$);

\draw[decorate,decoration=brace,thick] ($(third-K-Q-K)+(1.5cm,0.25cm)$) -- node [midway,right] {\small $1$st order} ($(third-K-K-Q)+(1.5cm,-0.25cm)$);

\draw[decorate,decoration=brace,thick] ($(third-K-K-K)+(1.5cm,0.25cm)$) -- node [midway,right] {\small $0$th order} ($(third-K-K-K)+(1.5cm,-0.25cm)$);

\end{tikzpicture}
\caption{
\label{fig:scatt-conv-net}
A time scattering network.
Each block with $|\gu_k|$ or $|\gd|$ outputs the modulus of the input convolved with a band-pass filter, subsampled by $2$.
Blocks with $\hu$ and $\hd$ convolves the input with a low-pass filter and subsample by $2$.
}
\end{figure}
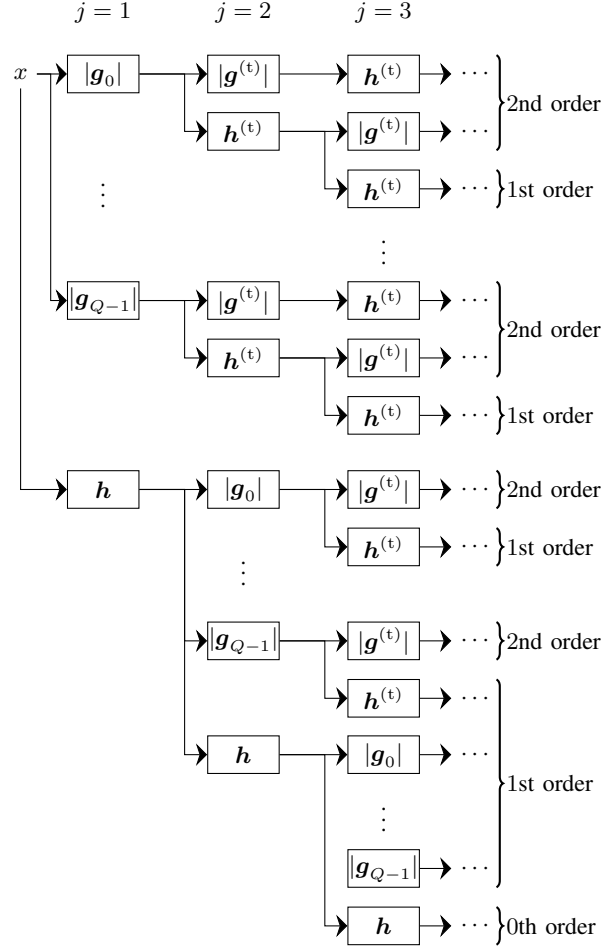

\subsection{Time Scattering}
\label{sec:time-scatt}

The mel-spectrogram discards a large amount of potentially useful information when averaging $\X(t, \la)$ along $t$ in \eqref{eq:mel}, removing any high-frequency structure.
The time scattering transform extends the mel-spectrogram and partially recovers this lost structure while maintaining invariance and stability \cite{stephane,dss}.
This is achieved in And\'en and Mallat \cite{dss} by convolving the scalogram with a second set of wavelets, taking the modulus, and averaging to create second-order time scattering coefficients.
Let us rederive this representation and implement it as a convolutional network extending that of the mel-spectrogram (see Figure \ref{fig:mel-freq-conv-net}).

The first-order time scattering coefficients coincide with the mel-spectrogram $\MEL \x(t, \la)$ and are given by
\begin{equation*}
\tmS_1 \x (t,\la) = \X(\cdot, \la) \conv \tmlow_T(t)~.
\end{equation*}
The lost high frequencies of $\X(t, \la)$ are recovered by convolving with a new set of wavelets, defined from a Morlet mother wavelet $\tmwavd(t)$ by $\tmwavd_\mu (t) = 2^{\mu} \tmwavd(2^{\mu} t)$ for $\mu \in \R$.
Each $\tmwavd_\mu(t)$ has a center frequency of approximately $2^\mu$, so we refer to $\mu$ as their log-frequency.
Unlike their first-order counterparts $\tmwavu_\la(t)$, the second-order wavelets $\tmwavd_\mu(t)$ have $Q = 1$.
As a result, they are better adapted to structures in $\X(t, \la)$, which are less oscillatory and more localized in time compared to those in $\x(t)$.

Convolving $\X(t, \la)$ with these wavelets along $t$, we obtain $\X(\cdot, \la) \conv \tmwavd_\mu(t)$.
To ensure local invariance to translation, we take another modulus and average using $\tmlow_T(t)$, which yields
\begin{equation}
\label{eq:snsd0f8s}
\tmS_2 \x (t,\la,\mu) = |\,|\x \conv \tmwavu_\la| \conv \tmwavd_\mu | \conv \tmlow_T (t)~.
\end{equation}
These are the second-order time scattering coefficients.
They describe the variability of $\X(t, \la)$ along $t$ at frequency $2^\mu$, where $\la$ is the first-order, or acoustic, log-frequency, while $\mu$ is the second-order, or modulation, log-frequency.
As before, we limit ourselves to scales shorter than $T$ by enforcing $2^{-\mu} \le T$.

Concatenating all first- and second-order scattering coefficients $\tmS_1 \x$ and $\tmS_2 \x$ of $\x(t)$ yields the time scattering transform $\tmS \x$.
Higher-order scattering coefficients may be defined \cite{stephane}, but these are of negligible energy \cite{irene} and do not greatly affect classification results \cite{dss}.
The scattering transform exhibits the same amount of time-shift invariance and time-warping stability as the mel-spectrogram described previously \cite{stephane,dss}.
It is more discriminative than the mel-spectrogram, however, since it captures amplitude modulations in $\X(t, \la)$ along $t$.
As a result, the time scattering transform enjoys better performance for classification of audio \cite{dss}, biomedical \cite{emb}, and other types of time series \cite{talmon2015manifold,sulam2017dynamical}.

Other approaches capture temporal structure in the scalogram using Fourier transforms \cite{hermansky-modulation,thompson-atlas} or second-order moments \cite{correlogram,patterson-auditory,mcdermott}.
However, these lack the time-warping stability or noise robustness of the scattering transform \cite{dss,stephane}.

Extending the mel-spectrogram convolutional network of Figure \ref{fig:mel-freq-conv-net}, we define the network of a discrete time scattering transform.
The result is shown in Figure \ref{fig:scatt-conv-net}.
To implement the second-order wavelets $\tmwavd_\mu(t)$, we use the network of Figure \ref{fig:wav-conv-net}, but with a single bandpass filter $\gd[n]$ and a lowpass filter $\hd[n]$.
These are constructed to approximate convolutions with $\tmwavd_\mu(t)$ for $\mu = -j \le -1$ as described in Section \ref{sec:wave}.

As before, $\x[n]$ is first decomposed in the $|\gu_k|$ boxes by convolution with $\gu_0[n], \ldots, \gu_{Q-1}[n]$ followed by modulus and subsampling by $2$.
However, instead of averaging their outputs, they are further convolved with $\gd[n]$ followed by modulus and subsampling, denoted by $|\gd|$ boxes.
These coefficients are then averaged using lowpass filters $\hd[n]$ which alternate with subsampling operators.
This yields the second-order scattering coefficients of $\x[n]$ for the highest octave in $\la$ and the highest octave in $\mu$.
We obtain lower octaves in $\mu$ by applying a sequence of convolutions with $\hd[n]$ alternated with subsampling operators before convolving with $\gd[n]$.
Similarly, lower octaves in $\la$ are obtained by applying a sequence of convolutions by $\h[n]$ and subsampling operators before the decomposition by $\gu_0[n], \ldots, \gu_{Q-1}[n]$.
The outputs of this convolutional network approximate the continuous time scattering transform $\tmS \x$ of $\x(t)$.

\section{Joint Representations in Time and Frequency}
\label{sec:time-freq-scat}

While successfully describing temporal modulation, the time scattering transform fails to capture more sophisticated time-frequency structure, as shown in Section \ref{sec:tf-loss}.
It fails because it decomposes the scalogram as a set of one-dimensional time series.
Section \ref{sec:tf-repr} introduces the joint time-frequency scattering transform, which instead decomposes the scalogram in both time and log-frequency.
Its convolutional network representation introduces connections between nodes in each layer, maintaining the amount of time-shift invariance but increasing its discriminability.
This property is demonstrated in Section \ref{sec:freq-mod}, where we show how the proposed transform accurately captures frequency-modulated excitations.

\subsection{Loss of Time-Frequency Structure}
\label{sec:tf-loss}

\begin{figure}
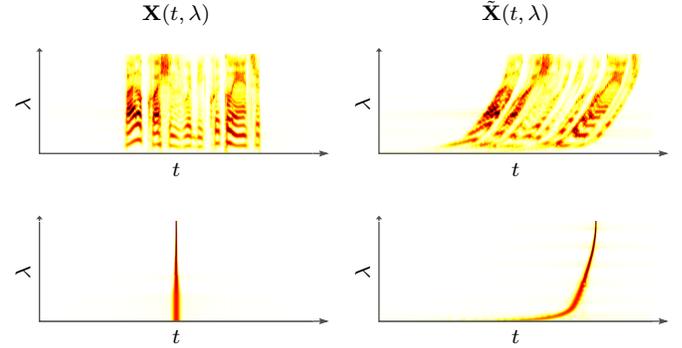

\begin{center}
\vspace{0.5cm}
\input{questions_original_gplt}%
\hspace{0.2cm}%
\input{questions_deformed_gplt}

\vspace{0.2cm}

\input{dirac_original_gplt}%
\hspace{0.2cm}%
\input{dirac_deformed_gplt}
\end{center}
\caption{\label{fig:two-tones}
Effect of frequency-dependent time-shifts $\tau(\la)$ on scalograms of a speech recording (top) and a Dirac delta function (bottom).
The two columns correspond to the original signal $\x(t)$ and the transformed signal $\tx(t)$, respectively.
}
\end{figure}

The time scattering convolutional network in Figure \ref{fig:scatt-conv-net} has a tree structure; that is, each node only has one parent.
In contrast, a general convolutional network sums contributions from multiple nodes in a layer to produce a node in the next layer.
Due to this tree structure, the time scattering transform is not sensitive to certain time-frequency deformations.

To see this, we suppose that $\x(t)$ is transformed into $\tx(t)$ whose scalogram $\tX(t, \la)$ is an approximate translation of $\X(t, \la)$ by $\tau(\la)$ in each frequency band.
In other words, $\tX(t, \la) \approx \X(t-\tau(\la), \la)$.
Such transformations are illustrated in Figure \ref{fig:two-tones} for a speech signal and a Dirac delta function.
This time-frequency warping misaligns the speech harmonics and transforms the delta function into a chirp.
Although $\x(t)$ differs markedly from $\tx(t)$, this is not detected by time scattering if $|\tau(\la)| \ll T$.
Indeed, the effect of the frequency-varying time shift disappears when averaging by $\tmlow_T(t)$.
Computing the scattering transforms $\tmS \x$ and $\tmS \tx$ for $T$ equal to the signal length yields relative differences $\|\tmS \tx - \tmS \x\|/\|\tmS \x\|$ of $0.07$ and $0.09$ for the speech signal and the delta function, respectively.

Detection of time-frequency warping requires measurement of scalogram variability across frequency.
In particular, the second-order wavelet convolution \eqref{eq:snsd0f8s} in time must be replaced by a convolution in time and log-frequency.

\subsection{Joint Time-Frequency Scattering}
\label{sec:tf-repr}

Existing methods for capturing a signal's time-frequency geometry are not always suitable for classification.
For example, McDermott and Simoncelli \cite{mcdermott} compute higher-order moments of the scalogram across frequencies.
Through synthesis experiments, this representation is shown to provide a good model for audio textures.
However, higher-order moments are not robust to noise, reducing the descriptor's usefulness for classification.

An alternative approach, motivated by neurophysiological studies in the audio cortex of ferrets, is the cortical transform of Shamma et al. \cite{shihab}.
It decomposes the scalogram in both time and log-frequency using two-dimensional Gabor wavelets.
The cortical transform and related representations have brought significant improvements over mel-spectrograms in tasks from speech classification \cite{Schadler2012,Schadler2015} to timbre analysis \cite{mesgarani,kleinschmidt,Siedenburg2016}.
Unfortunately, the lack of time-shift invariance and time-warping stability limits the performance of this approach.

In the following, we adapt the cortical transform within the scattering framework, allowing us to address its invariance and stability.
This also lets us analyze its discriminative power.

We first decompose the scalogram $\X(t, \la)$ using a two-dimensional wavelet transform.
As before, we use Morlet wavelets.
Two-dimensional Morlet wavelets are also used in the two-dimensional scattering transform, which enjoys significant success in natural image classification \cite{joan,laurent}.
In this case, however, the wavelets are obtained by rotating and uniformly scaling a mother wavelet, which is not appropriate for the scalogram.
Indeed, rotation does not preserve the relationship between time and frequency--a rotated scalogram is generally not the scalogram of some other signal.

\begin{figure}
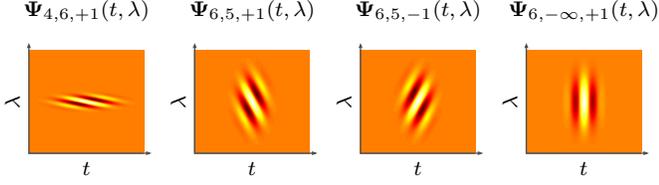

\begin{center}
\vspace{0.5cm}
\input{sample_morlets_2d_001_gplt}%
\input{sample_morlets_2d_002_gplt}%
\input{sample_morlets_2d_003_gplt}%
\input{sample_morlets_2d_004_gplt}%
\end{center}
\caption{
\label{fig:sample-psis}
Real parts of four time-frequency wavelets $\tfwav_{\mu,\ell,s}(t, \la)$.
White-yellow is negative, orange is zero, and red-black is positive.
}
\end{figure}

We instead define our wavelets separably, with independent scaling along time and log-frequency.
The time-frequency mother wavelet $\tfwav(t, \la) = \tfwavtm (t) \, \tfwavfr(\la)$ is the product of two one-dimensional functions in time and log-frequency.
Both the time $\tfwavtm(t)$ and the frequency $\tfwavfr(\la)$ wavelets are Morlet wavelets with $Q = 1$.
Dilating by $2^{-\mu}$ along $t$, dilating by $2^{-\ell}$ along $\la$, and reflecting according to $s$ yields the wavelet
\begin{equation}
\label{eq:tfwav-def}
\tfwav_{\mu,\ell,s} (t,\la) = 2^{\mu+\ell} \, \tfwavtm(2^{\mu} t) \, \tfwavfr(s 2^{\ell} \lambda)~,
\end{equation}
where the spin $s = \pm 1$ specifies the oscillation direction (up or down).
The frequency of the wavelet along $t$ is $2^\mu$, so $\mu$ is the log-frequency of $\tfwav_{\mu,\ell,s}(t, \la)$.
Its frequency along $\la$ is $2^\ell$, so we refer to it as a ``quefrency.''
Consequently, $\ell$ is the ``log-quefrency'' of $\tfwav_{\mu,\ell,s}(t, \la)$.

As before, $\mu$ satisfies $2^{-\mu} \le T$.
Along $\la$, we fix some maximum log-frequency scale $F$, measured in octaves, and let $2^{-\ell} \le F$.
At this maximum scale, we include a lowpass filter to capture average structure along $\la$.
Specifically, we set
\begin{equation}
\label{eq:tfwavlow-def}
\tfwav_{\mu,-\infty,+1} (t,\la) =
2^{\mu} \tfwavtm(2^{\mu} t) \, \tmlow_F (\lambda)~.
\end{equation}
Note that these are only defined for $s = +1$.
Figure \ref{fig:sample-psis} shows a few sample two-dimensional wavelets $\tfwav_{\mu,\ell,s}(t, \la)$.

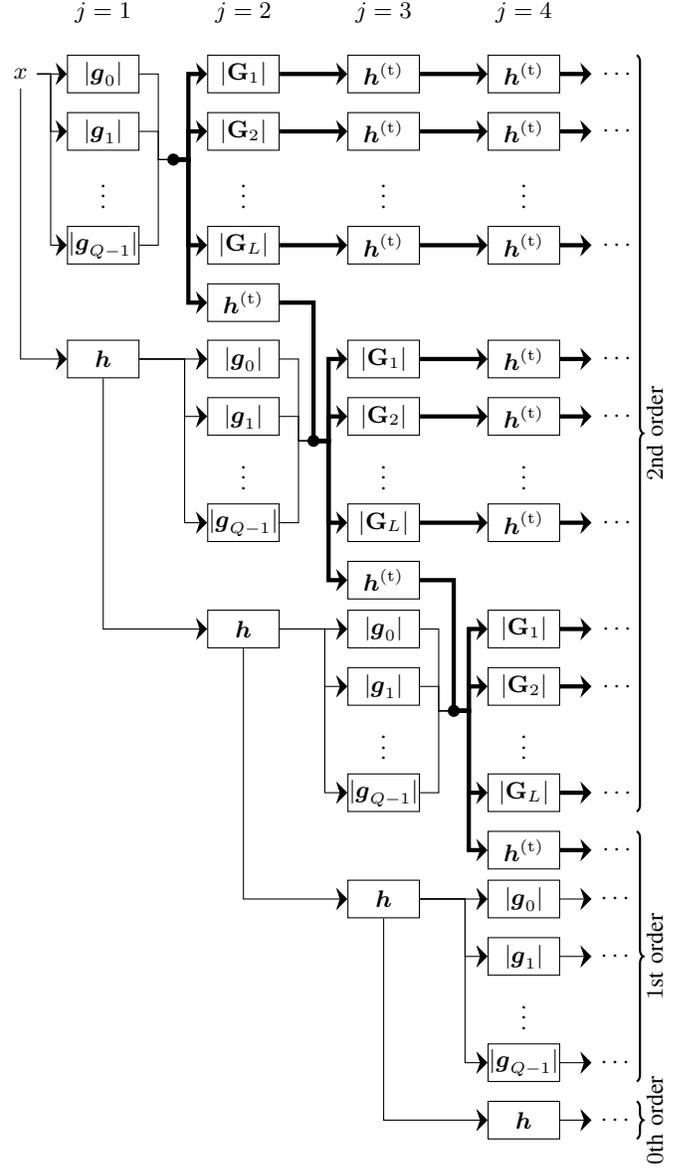
\begin{figure}
\begin{tikzpicture}
[
>={Stealth[width=3mm,length=1.5mm]},
font=\small,
signal/.style={draw=none,minimum size=0.2cm},
op/.style={rectangle,draw,minimum height=0.5cm,minimum width=0.95cm,inner sep=0},
missing/.style={draw=none,minimum size=0.2cm},
aggreg/.style=ultra thick
]


\node [signal] (input) at (0, 0) {$x$};


\node [op] (first-1) [right=0.4cm of input] {$|\gu_0|$};
\node [op] (first-2) [below=0.25cm of first-1] {$|\gu_1|$};
\node [op] (first-Q) [below=1cm of first-2] {$|\gu_{Q-1}|$};
\node [missing] (first-missing) at ($(first-2)!0.5!(first-Q)$) {$\vdots$};
\node [op] (first-K) [below=1cm of first-Q] {$\hu$};

\foreach \lab in {1, 2, Q}
    \draw [->] (input) -- ++(0.4cm, 0) |- (first-\lab.west);
\draw [->] (input.south) |- (first-K.west);

\node (first-j) [above=0.3cm of first-1] {$j = 1$};


\node [op] (second-1-1) [right=0.9cm of first-1] {$|\G_{1}|$};
\node [op] (second-1-2) [right=0.9cm of first-2] {$|\G_{2}|$};
\node [op] (second-1-Q) [right=0.9cm of first-Q] {$|\G_{L}|$};
\node [missing] (second-1-missing) at ($(second-1-2)!0.5!(second-1-Q)$) {$\vdots$};
\node [op] (second-1-K) [below=0.25cm of second-1-Q] {$\hd$};

\node [op] (second-K-1) [right=0.9cm of first-K] {$|\gu_0|$};
\node [op] (second-K-2) [below=0.25cm of second-K-1] {$|\gu_1|$};
\node [op] (second-K-Q) [below=0.9cm of second-K-2] {$|\gu_{Q-1}|$};
\node [missing] (second-K-missing) at ($(second-K-2)!0.5!(second-K-Q)$) {$\vdots$};
\node [op] (second-K-K) [below=0.9cm of second-K-Q] {$\hu$};

\draw (barycentric cs:first-1=1,first-Q=1,second-1-1=1,second-1-Q=1) node (first-mux) {};

\foreach \labone in {1, 2, Q}
    \draw (first-\labone.east) -|
          ($(first-mux.center)+(-0.2cm,0)$) --
          (first-mux.center);

\foreach \labtwo in {1, 2, Q, K}
    \draw [->,aggreg] (first-mux.center) --
               ($(first-mux.center)+(+0.2cm,0)$) |-
               (second-1-\labtwo.west);

\fill (first-mux) circle [radius=0.08cm];

\foreach \labtwo in {1, 2, Q}
    \draw [->] (first-K.east) -- ++(0.6cm, 0) |- (second-K-\labtwo.west);
\draw [->] (first-K.south) |- (second-K-K.west);

\node (second-j) [above=0cm of second-K-1 |- first-j,anchor=center] {$j = 2$};


\node [op] (third-1-1) [right=0.9cm of second-1-1] {$\hd$};
\node [op] (third-1-2) [right=0.9cm of second-1-2] {$\hd$};
\node [op] (third-1-Q) [right=0.9cm of second-1-Q] {$\hd$};
\node [missing] (third-1-missing) at ($(third-1-2)!0.5!(third-1-Q)$) {$\vdots$};

\node [op] (third-K-1) [right=0.9cm of second-K-1] {$|\G_{1}|$};
\node [op] (third-K-2) [right=0.9cm of second-K-2] {$|\G_{2}|$};
\node [op] (third-K-Q) [right=0.9cm of second-K-Q] {$|\G_{L}|$};
\node [missing] (third-K-missing) at ($(third-K-2)!0.5!(third-K-Q)$) {$\vdots$};
\node [op] (third-K-K) [below=0.25cm of third-K-Q] {$\hd$};

\node [op] (third-K-K-1) [right=0.9cm of second-K-K] {$|\gu_0|$};
\node [op] (third-K-K-2) [below=0.25cm of third-K-K-1] {$|\gu_1|$};
\node [op] (third-K-K-Q) [below=0.9cm of third-K-K-2] {$|\gu_{Q-1}|$};
\node [missing] (third-K-K-missing) at ($(third-K-K-2)!0.5!(third-K-K-Q)$) {$\vdots$};
\node [op] (third-K-K-K) [below=0.9cm of third-K-K-Q] {$\hu$};

\foreach \lab in {1, 2, Q}
    \draw [->,aggreg] (second-1-\lab.east) -- (third-1-\lab.west);

\draw (barycentric cs:second-K-1=1,second-K-Q=1,third-K-1=1,third-K-Q=1) node (second-mux) {};

\foreach \labone in {K-1, K-2, K-Q}
    \draw (second-\labone.east) -|
          ($(second-mux.center)+(-0.2cm,0)$) --
          (second-mux.center);
\draw [aggreg] (second-1-K.east) -|
              (second-mux.center);
\foreach \labtwo in {1, 2, Q, K}
    \draw [->,aggreg] (second-mux.center) --
                     ($(second-mux.center)+(+0.2cm,0)$) |-
                     (third-K-\labtwo.west);

\fill (second-mux) circle [radius=0.08cm];

\foreach \labtwo in {1, 2, Q}
    \draw [->] (second-K-K.east) -- ++(0.6cm, 0) |- (third-K-K-\labtwo.west);
\draw [->] (second-K-K.south) |- (third-K-K-K.west);

\node (third-j) [above=0cm of third-K-K-1 |- first-j,anchor=center] {$j = 3$};


\node [op] (fourth-1-1) [right=0.9cm of third-1-1] {$\hd$};
\node [op] (fourth-1-2) [right=0.9cm of third-1-2] {$\hd$};
\node [op] (fourth-1-Q) [right=0.9cm of third-1-Q] {$\hd$};
\node [missing] (fourth-1-missing) at ($(fourth-1-2)!0.5!(fourth-1-Q)$) {$\vdots$};

\node [op] (fourth-K-1) [right=0.9cm of third-K-1] {$\hd$};
\node [op] (fourth-K-2) [right=0.9cm of third-K-2] {$\hd$};
\node [op] (fourth-K-Q) [right=0.9cm of third-K-Q] {$\hd$};
\node [missing] (fourth-K-missing) at ($(fourth-K-2)!0.5!(fourth-K-Q)$) {$\vdots$};

\node [op] (fourth-K-K-1) [right=0.9cm of third-K-K-1] {$|\G_1|$};
\node [op] (fourth-K-K-2) [right=0.9cm of third-K-K-2] {$|\G_2|$};
\node [op] (fourth-K-K-Q) [right=0.9cm of third-K-K-Q] {$|\G_L|$};
\node [missing] (fourth-K-K-missing) at ($(fourth-K-K-2)!0.5!(fourth-K-K-Q)$) {$\vdots$};
\node [op] (fourth-K-K-K) [below=0.25cm of fourth-K-K-Q] {$\hd$};

\node [op] (fourth-K-K-K-1) [right=0.9cm of third-K-K-K] {$|\gu_0|$};
\node [op] (fourth-K-K-K-2) [below=0.25cm of fourth-K-K-K-1] {$|\gu_1|$};
\node [op] (fourth-K-K-K-Q) [below=0.9cm of fourth-K-K-K-2] {$|\gu_{Q-1}|$};
\node [missing] (fourth-K-K-K-missing) at ($(fourth-K-K-K-2)!0.5!(fourth-K-K-K-Q)$) {$\vdots$};
\node [op] (fourth-K-K-K-K) [below=0.25cm of fourth-K-K-K-Q] {$\hu$};

\foreach \lab in {1-1, 1-2, 1-Q, K-1, K-2, K-Q}
    \draw [aggreg,->] (third-\lab.east) -- (fourth-\lab.west);

\draw (barycentric cs:third-K-K-1=1,third-K-K-Q=1,fourth-K-K-1=1,fourth-K-K-Q=1) node (third-mux) {};

\foreach \labone in {K-1, K-2, K-Q}
    \draw (third-K-\labone.east) -|
          ($(third-mux.center)+(-0.2cm,0)$) --
          (third-mux.center);
\draw [aggreg] (third-K-K.east) -|
              (third-mux.center);
\foreach \labtwo in {1, 2, Q, K}
    \draw [->,aggreg] (third-mux.center) --
                     ($(third-mux.center)+(+0.2cm,0)$) |-
                     (fourth-K-K-\labtwo.west);

\fill (third-mux) circle [radius=0.08cm];

\foreach \labtwo in {1, 2, Q}
    \draw [->] (third-K-K-K.east) -- ++(0.6cm, 0) |- (fourth-K-K-K-\labtwo.west);
\draw [->] (third-K-K-K.south) |- (fourth-K-K-K-K.west);

\node (fourth-j) [above=0cm of fourth-K-K-K-1 |- first-j,anchor=center] {$j = 4$};


\foreach \lab in {1-1, 1-2, 1-Q, K-1, K-2, K-Q, K-K-1, K-K-2, K-K-Q, K-K-K}
    \draw [aggreg,->] (fourth-\lab) -- ++(0.9cm,0) node [anchor=west] {$\cdots$};
\foreach \lab in {K-K-K-1, K-K-K-2, K-K-K-Q, K-K-K-K}
    \draw [->] (fourth-\lab) -- ++(0.9cm,0) node [anchor=west] {$\cdots$};

\draw[decorate,decoration=brace,thick] ($(fourth-1-1)+(1.5cm,0.25cm)$) -- node [midway,below,rotate=90] {\small $2$nd order} ($(fourth-K-K-Q)+(1.5cm,-0.25cm)$);

\draw[decorate,decoration=brace,thick] ($(fourth-K-K-K)+(1.5cm,0.25cm)$) -- node [midway,below,rotate=90] {\small $1$st order} ($(fourth-K-K-K-Q)+(1.5cm,-0.25cm)$);

\draw[decorate,decoration=brace,thick] ($(fourth-K-K-K-K)+(1.5cm,0.25cm)$) -- node [midway,below,rotate=90] {\small $0$th order} ($(fourth-K-K-K-K)+(1.5cm,-0.25cm)$);

\end{tikzpicture}
\caption{
\label{fig:joint-conv-net}
A joint time-frequency scattering network.
Each $|\gu_k|$ block convolves a one-dimensional signal with the band-pass filter $\gu_k$ and outputs its modulus.
The outputs are aggregated into two-dimensional arrays shown by thick lines.
A $|\G_\ell|$ block convolves a two-dimensional array with the band-pass filter $\G_\ell$ and outputs its modulus.
The $\hu$ and $\hd$ blocks convolve only in time.
All blocks subsample their output in time by $2$.
}
\end{figure}

The two-dimensional wavelet transform of the scalogram $\X(t,\la)$ computes convolutions $\X \conv \tfwav_{\mu,\ell,s}(t,\la)$.
It captures the joint variability of $\X(t,\la)$ at log-frequency $\mu$ and log-quefrency $\ell$ with spin $s$.
To ensure time-shift invariance and time-warping stability, we take the complex modulus and average, obtaining the second-order joint time-frequency scattering coefficients
\begin{equation}
\label{eq:joint-scatt}
\tfS_2 \x(t,\la,\mu,\ell,s) =
| \X \conv \tfwav_{\mu,\ell,s} (\cdot,\tmindu)| \conv \tmlow_T (t).
\end{equation}
These coefficients describe the time-frequency geometry of $\x(t)$ at time $t$ and log-frequency $\la$.
They retain the time-shift invariance and time-warping stability of the second-order time scattering coefficients, but with increased discriminative power.

Concatenating the first-order time scattering coefficients $\tmS_1 \x$ and the second-order time-frequency scattering coefficients $\tfS_2 \x$ yields the complete joint time-frequency scattering transform $\tfS \x$ of $\x(t)$.
As for time scattering, we may define higher-order coefficients, but these are often of limited use for classification.
For each $t$, there are $O(Q \log_2 T)$ first-order coefficients and $O(Q (\log_2 T)^2 \log_2 F)$ second-order coefficients.

We now define a convolutional network to provide a discrete implementation of the joint scattering transform.
In the time scattering network (see Figure \ref{fig:scatt-conv-net}), we approximate the convolution of $\X(t, \la)$ with $\tmwavd_\mu(t)$ along $t$ by cascading discrete filters $\hd[n]$ and $\gd[n]$, alternated with subsampling operators.
The joint transform network incorporates additional filters along the discrete log-frequency $m = Q \la = -jQ-k \in \Z$, where, as before, $j$ and $k$ are the octave and subband indices of $\la$.

In a given layer, the modulus bandpass outputs of the previous layer are arranged along time $n$ and log-frequency $m$ into a two-dimensional array.
This array is then filtered along $m$ by different filters $2^\ell \tfwavfr(s 2^\ell m/Q)$.
It is also filtered by $\tmlow_F(m/Q)$ to account for $\ell = -\infty$.
The sampling interval of the filters is $1/Q$, since this is the spacing of the discretized log-frequencies $\la = m/Q$.
Each frequency-filtered array is then filtered by $\gd[n]$ along $n$.

Combining these into two-dimensional filters, we get
\begin{align*}
& \G_{\ell,s} [n,m] = \gd[n] \, 2^\ell \tfwavfr(s 2^\ell m/Q) \\
& \G_{-\infty,+1} [n,m] = \gd[n] \, \tmlow_F (m/Q)~,
\end{align*}
where $\ell \in \Z$ such that $-\log_2 F \le \ell \le \log_2 Q$ (to ensure that $1/Q \le 2^{-\ell} \le F$) and $s = \pm 1$.
Abusing notation slightly, we renumber this set of discrete filters as $\G_1[n,m], \ldots, \G_{L}[n,m]$.
These filters capture all log-quefrencies along $\la$, but only high frequencies along $n$.
The missing low frequencies are absorbed by $\hd[n]$, which averages along $n$, leaving $m$ intact.

Using these filters, we construct the convolutional network shown in Figure \ref{fig:joint-conv-net}, extending the time scattering network of Figure \ref{fig:scatt-conv-net}.
Small circles denote aggregation of multiple time series into a two-dimensional array, while the arrays themselves are thick lines.
We denote by a boxed $|\G_{\ell}|$ convolution with $\G_{\ell}[n,m]$ for $\ell = 1, \ldots, L$, followed by a complex modulus and subsampling by $2$ along $n$.
Similarly, a boxed $\hd$ denotes lowpass filtering along $n$ by $\hd[n]$ followed by subsampling.

Starting with a signal $\x[n]$, we first compute its decomposition using the first-order blocks $|\gu_0|, \ldots, |\gu_{Q-1}|$, extracting the highest octave of the signal.
We then combine these into a two-dimensional array which is decomposed by $|\G_1|, \ldots, |\G_L|$.
The outputs of $|\G_1|, \ldots, |\G_L|$ are then forwarded to a succession of $\hd$ blocks which implement the averaging by $\tmlow_T[n]$.
The original array is also decomposed by $\hd$, and the result is concatenated to the first-order outputs of the second layer (that is, the second octave of the original signal).
We then repeat the process on this array.
As before, an appropriate choice of $\gd[n]$ and $\hd[n]$ ensures that the network accurately approximates the continuous joint scattering transform.

The important difference between this network and the time scattering network is the presence of within-layer connections.
These break the tree structure, increasing discriminative power through better characterization of time-frequency geometry.
Returning to the frequency-warped signals of Figure \ref{fig:two-tones}, the joint network separates the original and transformed signals, with $\|\tfS \tx - \tfS \x\|/\|\tfS \x\|$ of $0.41$ and $0.90$, compared to $0.07$ and $0.09$ for time scattering.
This network therefore has same time-shift invariance as time scattering, but with better discriminability.

Note that this increased discriminative power may not always be desirable.
For example, frequency-dependent time-shifts (as shown in Figure \ref{fig:two-tones}) or similar transformations may not be relevant for the classification task.
In this case, replacing the time scattering transform with a joint time-frequency scattering transform would needlessly increase the number of model parameters, potentially requiring more training data to train an accurate classifier.
On the other hand, the high-quefrency joint coefficients approximate the standard second-order time scattering coefficients.
As a result, the types of structures captured by the time scattering transform are equally well characterized by the joint transform, so little discriminative power is lost by replacing the former by the latter.

The invariance and discriminability properties of the transform are controlled by three parameters: $Q$, $T$, and $F$. The number of wavelets per octave, $Q$, depends on the time-frequency localization of the input signal. For example, if the signal is highly oscillatory (that is, well-localized in frequency, but not necessarily in time), a higher value for $Q$ is appropriate. This is the case in audio, but not necessarily for biomedical or geophysical time series, which are more localized in time.

The averaging scale $T$ controls the maximum length of the signal structure captured by the transform. In other words, if the relevant structure in a classification problem occurs at very small scales, $T$ should be kept small. This is the case in phone segment classification (see Section \ref{sec:timit}), where the object of interest, the phone, is of short duration. For other signals, such as musical instrument recordings (see Section \ref{sec:medleydb}), there are relevant structures at larger scales. The $T$ parameter also controls the length of the lowpass filter $\tmlow_T(t)$ and therefore determines the amount of desired time-shift invariance.

The frequency scale $F$ has a similar role, controlling the maximum frequency extent of the signal structure captured by the transform. If we expect relevant structures to spread out over several octaves, a large value for $F$ is needed. This is the case for speech signals, where plosive phones occupy a large part of the frequency domain. For other signals, such environmental sounds, relevant structures may be confined within an octave, so a small $F$ is more appropriate.

The output of a scattering network may be used as input to another convolutional network whose filters are subsequently optimized for some classification task.
This yields a large convolutional network taking raw waveforms as input and whose first few layers are fixed.
By fixing certain layers, the network has fewer parameters to optimize and could then be trained using less data.
Previous work training convolutional networks on raw waveforms have yielded mel-like filters in the first few layers \cite{tuske2014acoustic,sainath2015learning}, providing some support for this idea.
Other attempts at explicitly incorporating wavelets into convolutional network architectures have also demonstrated the viability of the approach \cite{fujieda2018wavelet,oyallon2018compressing,shi20173d,liao2017wavelet}.
In addition, the success of transfer learning \cite{hamel2013transfer,vandenoord2014transfer,aytar,arandjelovic-zisserman} suggests that there exist certain universal representations which perform well for a wide range of tasks.
The joint scattering network provides a way to construct such a representation while enforcing certain time-shift invariance and time-frequency discriminability conditions.

\subsection{Frequency Modulation}
\label{sec:freq-mod}

The above construction is similar to that of traditional convolutional networks except that filters are not learned from data.
These filters provide the time-shift invariance and time-warping stability of the time scattering transform, but the joint transform is also more discriminative.
To illustrate this, we show how the joint time-frequency scattering transform captures frequency modulation structure ignored by time scattering.

Let $\x(t) = \exp(2\pi \imunit\, \xi(t))$ be a frequency-modulated excitation with instantaneous phase $\xi(t)$.
At time $t$, its instantaneous frequency is given by $\xi'(t)$, while the relative change in this frequency, the (relative) chirp rate, is $\xi''(t)/\xi'(t)$.
Frequency modulation occurs in a variety of signals, such as speech, animal calls, music and radar signals \cite{richards2005fundamentals}.

We now consider a particular case of frequency modulation: the exponential chirp.
Here $\xi(t) = 2^{\alpha t}$, so it has instantaneous frequency $\xi'(t) = \alpha \log (2) 2^{\alpha t}$ and constant chirp rate $\xi''(t)/\xi'(t) = \alpha \log(2)$.
We note that an arbitrary frequency-modulated excitation may be locally approximated by an exponential chirp by setting $\alpha = (\log 2)^{-1} \, \xi''(t)/\xi'(t)$.

For exponential chirps, we have the following result.
\begin{theorem}
\label{thm:fm}
Let $\tmwavu_\la(t)$ and $\tfwav_{\mu,\ell,s}(t, \la)$ be defined as in \eqref{eq:dilawave} and \eqref{eq:tfwav-def}.
We require that $\tmwav(t)$ have compact support, that $\|\tmwav\|_\infty$, $\|\tmwav'\|_1$, $\int_\R |u| |\tmhwav (2^u)| \,\diff u$, and $\|{\tfwavtm}'\|_\infty$ are bounded, and that $\supp \tfwavfr(\la) \subset [-A, A]$ for some $A > 0$.
Further, we assume that $\tfwavtm(t)$ is the product of a positive envelope $|\tfwavtm|(t)$ and  $\exp(2\pi \imunit \, t)$.
Let $\x(t) = \exp(2\pi \imunit \, 2^{\alpha t})$ for some $\alpha \in \R$.
The joint scattering transform \eqref{eq:joint-scatt} then satisfies
\begin{align*}
&\tfS_2 \x(t, \la, \mu, \ell, s) \\
&\quad = \frac{c_0 E(t, \la)}{\alpha} \, \left| \tfhwavfr\left(-\frac{s 2^{\mu-\ell}}{\alpha}\right)\right| + \varepsilon(t, \la, \mu, \ell, s)~,
\end{align*}
where
\begin{equation*}
E(t, \la) = |\tfwavtm_\mu| \conv \phi_T\left( t - \frac{\la}{\alpha} + \frac{\log\log 2^\alpha}{\log 2^\alpha} \right)~,
\end{equation*}
\begin{equation*}
\|\varepsilon\|_\infty < C \left(|\alpha| 2^{-\la + 2^{-\ell} A} + 2^{2\mu} |\alpha|^{-2} + 2^{2\mu-\ell} |\alpha|^{-2}\right)~,
\end{equation*}
for $C > 0$ depending only on $\tmwav(t)$, $\tfwavtm(t)$, and $\tfwavfr(\la)$, and $c_0 = \int_\R |\tmhwav(2^u)| \,\diff u$.
\end{theorem}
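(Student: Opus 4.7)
The plan has three stages: a narrowband approximation of the scalogram, extraction of the leading term via two changes of variables, and an error budget for the three independent approximations.

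\emph{Step 1 (scalogram).} I would linearise the chirp phase: since $\xi(t-\tau) = \xi(t) - \xi'(t)\tau + r(t,\tau)$ with $|r(t,\tau)| \lesssim \tau^2\,\|\xi''\|_{L^\infty(I)}$ on the support $I$ of $\tmwavu_\la(\tau)$, and since $|\x(t)|=1$, integrating against $\tmwavu_\la$ yields
\begin{equation*}
\x \conv \tmwavu_\la(t) = \x(t)\,\tmhwav\bigl(2^{-\la}\xi'(t)\bigr) + R(t,\la),
\end{equation*}
with $|R| \lesssim \|\xi''\|_\infty \int \tau^2 |\tmwavu_\la(\tau)|\,d\tau$. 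For the exponential chirp $\xi'' = \alpha(\log 2)\,\xi'$, and on the relevant support where $\xi'(t)\sim 2^\la$ this yields $|R|\lesssim |\alpha|\,2^{-\la}$. Substituting $\xi'(t) = \alpha(\log 2)\,2^{\alpha t}$ and setting $\tau_*(\la) = \la/\alpha - \log\log 2^\alpha/\log 2^\alpha$ (the instantaneous-frequency ridge), this gives $\X(t,\la) \approx f(t-\tau_*(\la))$, where $f(u) = |\tmhwav(2^{\alpha u})|$. The assumptions that $\tmwav$ have compact support and that $\|\tmwav\|_\infty,\|\tmwav'\|_1$ are bounded make this bound quantitative.

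\emph{Step 2 (leading term).} With $y = t-\tau_*(\la)$ and the affine identity $\tau_*(\la') = \tau_*(\la) + (\la'-\la)/\alpha$, the changes of variables $u = t'-\tau_*(\la')$ and $w = s\,2^\ell(\la-\la')$ convert $\X \conv \tfwav_{\mu,\ell,s}(t,\la)$ into
\begin{equation*}
s\int\!\!\int f(u)\,2^\mu\,\tfwavtm\bigl(2^\mu(y-u+s\,2^{-\ell}w/\alpha)\bigr)\,\tfwavfr(w)\,du\,dw.
\end{equation*}
Using $\tfwavtm(\cdot) = |\tfwavtm|(\cdot)\exp(2\pi i\,\cdot)$, the phase factorises as $\exp(2\pi i\,2^\mu y)\exp(-2\pi i\,2^\mu u)\exp(2\pi i\,s\,2^{\mu-\ell}w/\alpha)$. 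Freezing the envelope $|\tfwavtm|$ at $w=0$ reduces the $w$-integral to $\tfhwavfr(-s\,2^{\mu-\ell}/\alpha)\cdot|\tfwavtm|(2^\mu(y-u))$; freezing the remaining envelope at $u=0$ then gives $|\tfwavtm|(2^\mu y)\,\hat f(2^\mu)$, which to leading order equals $(c_0/\alpha)\,|\tfwavtm|(2^\mu y)$ via the identity $\int f(u)\,du = \alpha^{-1}\int|\tmhwav(2^u)|\,du = c_0/\alpha$ (change of variables $u\mapsto\alpha u$). Multiplying by the outer $2^\mu$ yields $|\tfwavtm_\mu|(y)$; taking the complex modulus kills the outer phase; and convolution by $\phi_T$ in $t$ commutes with the $t$-independent shift by $\tau_*(\la)$, producing the claimed leading term $(c_0/\alpha)\,|\tfhwavfr(-s\,2^{\mu-\ell}/\alpha)|\,E(t,\la)$.

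\emph{Step 3 (errors).} The three summands in the bound on $\|\varepsilon\|_\infty$ correspond to the three approximations above. Propagating $R$ through the $L^1$-normalised kernels $\tfwav_{\mu,\ell,s}$ and $\phi_T$, and taking the worst $\la'$ in the support $[\la-2^{-\ell}A,\la+2^{-\ell}A]$ of $\tfwavfr(s\,2^\ell(\la-\cdot))$, yields $|\alpha|\,2^{-\la+2^{-\ell}A}$. The first-order error in the $u$-Taylor/Fourier step is $2^\mu\,\|\tfwavtm'\|_\infty\int|u|\,f(u)\,du = O(2^\mu/\alpha^2)$, using $\int|u|\,f(u)\,du = \alpha^{-2}\int|u|\,|\tmhwav(2^u)|\,du$; multiplied by the outer $2^\mu$ this gives $O(2^{2\mu}/\alpha^2)$. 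The first-order error in the $w$-Taylor step is $\|\tfwavtm'\|_\infty\cdot 2^\mu/(|\alpha|\,2^\ell)\cdot\int|w|\,|\tfwavfr(w)|\,dw = O(2^{\mu-\ell}/|\alpha|)$ after using $\supp\tfwavfr\subset[-A,A]$; combined with the outer $2^\mu$ and the $\int f = c_0/\alpha$ factor this gives $O(2^{2\mu-\ell}/\alpha^2)$. The main obstacle will be making the linearisation of Step 1 quantitative and uniform in $\la'$ under only the stated smoothness hypotheses, so that (i) appears with the precise exponential $2^{-\la+2^{-\ell}A}$ rather than a looser form, and verifying that the subsequent modulus and averaging steps do not inflate these errors (via the reverse triangle inequality $\bigl||a|-|b|\bigr|\le|a-b|$ and $\|\phi_T\|_1=1$).
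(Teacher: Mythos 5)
Your overall architecture coincides with the paper's: a ridge approximation $\X(t,\la)\approx|\tmhwav(\log(2^\alpha)\,2^{\alpha t-\la})|$ with error $O(|\alpha|2^{-\la})$, followed by two frozen-envelope approximations extracting $|\tfwavtm_\mu|(t-t_0(\la))$ and $\tfhwavfr(-s2^{\mu-\ell}/\alpha)$ with errors $O(2^{2\mu}\alpha^{-2})$ and $O(2^{2\mu-\ell}\alpha^{-2})$, and finally propagation of the scalogram error through the support $[-2^{-\ell}A,2^{-\ell}A]$ of $\tfwavfr$ to produce the factor $2^{2^{-\ell}A}$, with the modulus and the $\tmlow_T$ average handled by the reverse triangle inequality and $L^1$ normalization. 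Your Steps 2 and 3 reproduce Lemma~\ref{lem:scal-decomp} and the final assembly essentially term for term; the only cosmetic difference is that you write one double integral and freeze the two envelopes in the opposite order from the paper, which convolves in time first and in log-frequency second.

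The genuine gap is in Step~1, and you have half-identified it yourself. Bounding the linearization remainder by $\|\xi''\|_{L^\infty(I)}\int\tau^2|\tmwavu_\la(\tau)|\,\diff\tau$ gives, for the exponential chirp, a bound proportional to $\alpha^2\,2^{\alpha t}\,2^{-2\la}$, which is $O(|\alpha|2^{-\la})$ only on the ridge $2^{\alpha t}\sim 2^{\la}/|\alpha|$ and diverges for $t$ above it, where $\xi''$ is large. Since the theorem asserts a bound on $\|\varepsilon\|_\infty$ over all $(t,\la)$, the off-ridge regime cannot be discarded: there the true coefficient is small because the integrand oscillates, but a Taylor bound on the phase cannot see this cancellation. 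The paper's Lemma~\ref{lem:chirp-wave} instead writes $\x\conv\tmwavu_\la(t)=\int\g(u)\,\euler^{-2\pi\imunit u\log(2^\alpha)2^{\alpha t}}\tmwavu_\la(u)\,\diff u$ with $|\g|=1$, integrates by parts to obtain the boundary term $\g(2^{-\la}\Delta)\,\tmhwav_\la(\log(2^\alpha)2^{\alpha t})$ plus $\int\g'\,\I$, and bounds the inner integral $\I(u)$ by a second integration by parts, yielding $|\I(u)|\le 2^{\la}(\|\tmwav\|_\infty+\|\tmwav'\|_1)/(2\pi|\log 2^\alpha|\,2^{\alpha t})$. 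The factor $2^{\alpha t}$ in $\|\g'\|_\infty$ cancels against the $2^{-\alpha t}$ in $\I$, leaving the bound $2\log(2)\Delta^2(\|\tmwav\|_\infty+\|\tmwav'\|_1)\,|\alpha|2^{-\la}$ uniformly in $t$. This non-stationary-phase cancellation is the missing idea; note that the hypothesis $\|\tmwav'\|_1<\infty$, which your argument never invokes, is exactly what the inner integration by parts consumes. With Step~1 replaced by this argument, the rest of your proposal goes through as written.
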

The proof is given in Appendix \ref{sec:appendix}.
The result relies on approximating $\X(t, \la)$ by $|\tmhwav(\log (2^\alpha) 2^{-\la+\alpha t})|$.
Since $|\tmhwav(\om)|$ is maximized at $\om = 1$, this forms a ridge $\la = \alpha t$ with slope $\alpha$, as illustrated in Figure \ref{fig:chirp-model}(a,b).
In the joint transform, this ridge only activates certain second-order wavelets $\tfwav_{\mu,\ell,s}(t, \la)$.
Indeed, only wavelets whose slope $-s2^{\mu-\ell}$ aligns with $\la = \alpha t$ yield large coefficients.
Taking the complex modulus and averaging in time preserves this slope information.

\begin{figure}
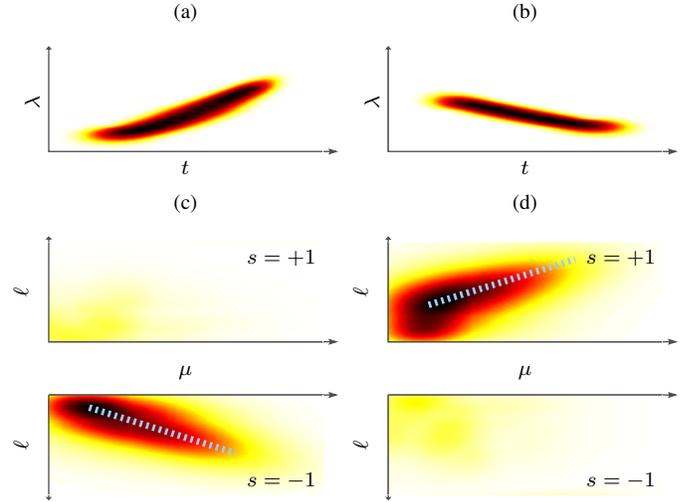

\vspace{0.5cm}
\input{chirp_001_scal_gplt}%
\hspace{0.2cm}%
\input{chirp_002_scal_gplt}%

\vspace{0.5cm}
\input{chirp_001_joint_pos_gplt}%
\hspace{0.2cm}%
\input{chirp_002_joint_pos_gplt}%

\input{chirp_001_joint_neg_gplt}%
\hspace{0.2cm}%
\input{chirp_002_joint_neg_gplt}%
\caption{
\label{fig:chirp-model}
Scalograms of two exponential chirps with chirp rates (a) $\alpha = 4$ and (b) $\alpha = -2$.
(c, d) Corresponding second-order joint time-frequency scattering coefficients $\tfS_2 \x(t, \la, \mu, \ell, s)$ for fixed $t$ and $\la$. The dotted lines satisfy $s2^{\mu-\ell} = -\alpha$.
}
\end{figure}

Let us consider another chirp $\tx(t) = \exp(2\pi \imunit \, 2^{\tilde{\alpha} t})$.
We may obtain $\tx(t)$ from $\x(t)$ using a frequency-dependent time-shift of its scalogram $\X(t, \la)$ as in Section \ref{sec:tf-loss}.
Here, we take
\begin{equation*}
\tau(\la) = \la\left(\frac{1}{\alpha} - \frac{1}{\tilde\alpha}\right) - \frac{\log\log 2^\alpha}{\log 2^\alpha} + \frac{\log\log 2^{\tilde \alpha}}{\log 2^{\tilde \alpha}}~.
\end{equation*}
As we saw in Section \ref{sec:tf-loss}, the time scattering transform is not sensitive to such changes.
In other words, the scattering transform discards information on slope, rendering it unsuitable for describing frequency modulation.
The same applies to related representations which also decompose each subband of $\X(t, \la)$ separately, such as mel-spectrograms, MFCCs, and modulation spectrograms.
This information loss is fundamentally due to the tree structure of their convolutional networks.

Theorem \ref{thm:fm} states that, for fixed $t$ and $\la$, $\tfS_2 \x(t, \la, \mu, \ell, s)$ is approximately proportional to $|\tfhwavfr(-s 2^{\mu-\ell} \alpha^{-1})|$.
Since $\tfhwavfr$ is concentrated around frequency $1$, this is maximized for $-s 2^{\mu-\ell} \alpha^{-1} = 1$.
In other words, a ridge is present along $s 2^{\mu-\ell} = -\alpha$.
Frequency modulation structure in the form of the chirp rate $\alpha$, is thus encoded in the second-order joint time-frequency scattering coefficients.
Consequently, they are sensitive to frequency-dependent time-shifts $\X(t, \la) \mapsto \X(t-\tau(\la), \la)$ even when $|\tau(\la)| \ll T$, since these change $\alpha$.

Figure \ref{fig:chirp-model}(c,d) displays a subset of the second-order joint scattering coefficients for the chirps whose scalograms are shown in Figure \ref{fig:chirp-model}(a,b).
These coefficients do indeed show a maximum along the predicted ridge.
At low $\ell$ and high $\mu$, the approximation does not hold, but for most of the frequency range, it is accurate.
We thus see how the chirp rate $\alpha$ is captured by the joint scattering coefficients in a natural way.

\section{Audio Texture Synthesis}
\label{sec:recon}

Section \ref{sec:tf-loss} showed how mel-spectrograms and time scattering transforms do not adequately capture time-frequency structure.
As $T$ increases, this problem becomes more serious, necessitating the introduction of the joint time-frequency scattering transform.
In this section, we illustrate the representational power of this transform using texture synthesis experiments.

With the aim of generating realistic soundtracks of arbitrary duration, audio texture synthesis has many applications in virtual reality and multimedia design \cite{Schwarz2011}.
In computational neuroscience, it also offers a testbed for the comparative evaluation of biologically plausible models for auditory perception \cite{mcdermott}.
Given a signal $\x(t)$ and a time-shift invariant representation $\Phi \x$ of $\x$, the texture synthesis problem may be formulated as the minimization of the error
$E(\y) = \left\Vert \Phi \y - \Phi \x \right\Vert ^2$
between $\Phi \x$ and the representation $\Phi \y$ of the synthesized signal $\y(t)$.
Here, $\Phi$ can be a scattering transform $\tfS$, a mel-spectrogram $\MEL$, or some other representation.
Note that minimizing $E(\y)$ does not imply that $\y(t)$ approximates $\x(t)$ in any way; since $\Phi$ is a time-shift invariant representation, this is not possible.
Instead, we expect $\y(t)$ to contain examples of the time-frequency structures captured in $\Phi(\x)$.

The state of the art in the domain is held by McDermott and Simoncelli \cite{mcdermott}, who define $\Phi$ using a set of summary statistics.
These statistics are similar to the time scattering transform as they are calculated using cascades of constant-$Q$ filterbanks and pointwise nonlinearities.
However, unlike the scattering transform, which simply averages in time, McDermott and Simoncelli also compute higher-order statistical moments: variance, skewness, kurtosis, and correlation coefficients across frequency bands.
These coefficients are very sensitive to outliers in the data, which reduces their applicability to classification.

To synthesize $\y(t)$, we first initialize using random Gaussian noise with power spectral density matching the first-order scattering coefficients $\tmS_1 \x(t, \la)$ of the target waveform $\x(t)$, since these coefficients are present in all the considered representations.
We then iteratively refine the signal by gradient descent \cite{bruna-texture}.
Because the modulus nonlinearity is not convex, the error $E(\y)$ is not convex; consequently, gradient descent only converges to a local minimum of $E(\y)$.
However, this local minimum is typically of low error, with $E(\y)$ equal to around $0.02 \times \| \Phi x\|^2$ for typical audio recordings.
We found empirically that the convergence rate is increased using a fixed momentum term and a ``bold driver'' learning rate policy \cite{sutskever2013icml}.

Gradient descent in a scattering network can be implemented by backpropagation from deeper to shallower layers.
Like in a deep convolutional network, the gradient backpropagation of the convolution with each wavelet $\gu_k(t)$ corresponds to a convolution with the adjoint filter $\gu^\dagger_k (t) = \bar{\gu}_k (-t)$, obtained by time reversal and complex conjugation of $\gu_k(t)$.

\begin{figure}[!ht]
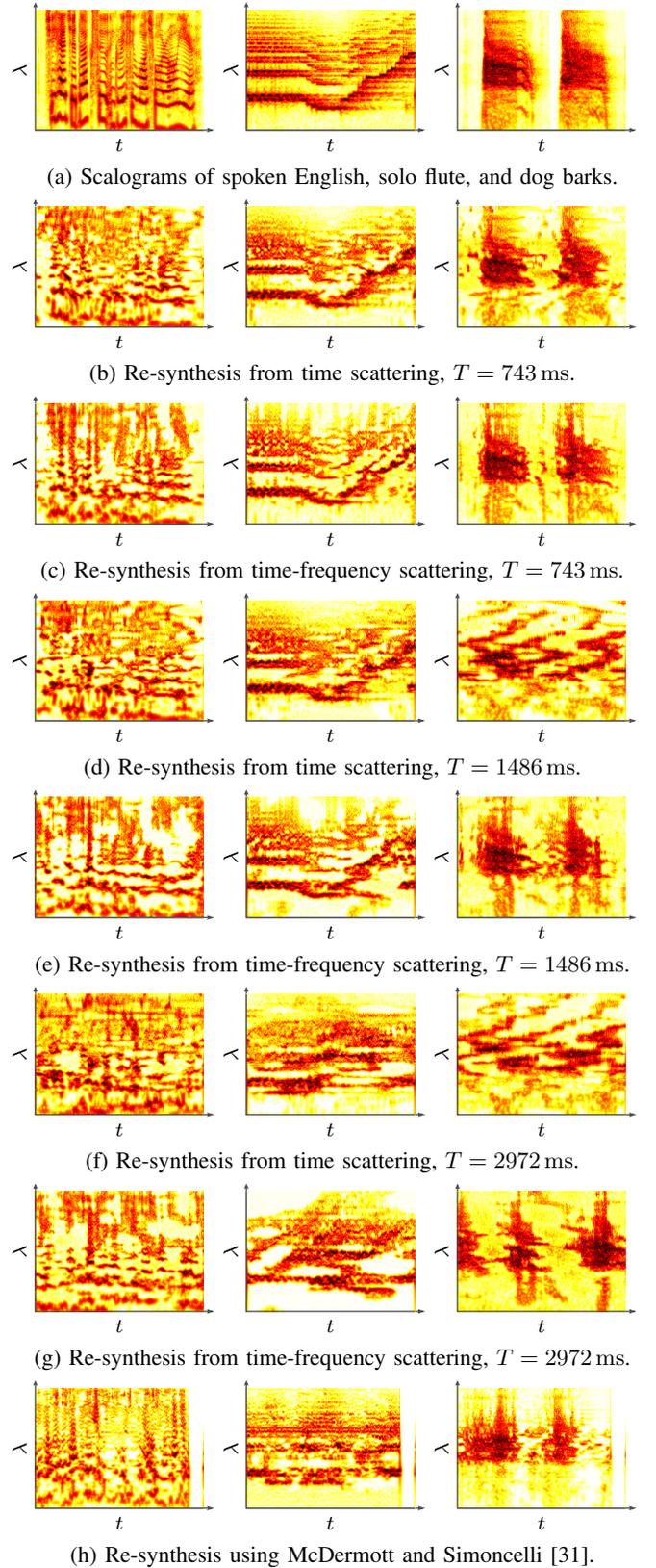

\centering
\begin{subfigure}{\linewidth}
\vspace{-0.1cm}
\input{speech_original_gplt}%
\input{flute_original_gplt}%
\input{dog-bark_original_gplt}
\caption{Scalograms of spoken English, solo flute, and dog barks.}
\end{subfigure}
\begin{subfigure}{\linewidth}
\input{speech_Q=08_J=14_sc=time_wvlt=morlet_it=100_gplt}%
\input{flute_Q=08_J=14_sc=time_wvlt=morlet_it=100_gplt}%
\input{dog-bark_Q=08_J=14_sc=time_wvlt=morlet_it=100_gplt}
\caption{Re-synthesis from time scattering, $T=\SI{743}{ms}$.}
\end{subfigure}
\begin{subfigure}{\linewidth}
\input{speech_Q=08_J=14_sc=time-frequency_wvlt=morlet_it=100_gplt}%
\input{flute_Q=08_J=14_sc=time-frequency_wvlt=morlet_it=100_gplt}%
\input{dog-bark_Q=08_J=14_sc=time-frequency_wvlt=morlet_it=100_gplt}
\caption{Re-synthesis from time-frequency scattering, $T=\SI{743}{ms}$.}
\end{subfigure}
\begin{subfigure}{\linewidth}
\input{speech_Q=08_J=15_sc=time_wvlt=morlet_it=100_gplt}%
\input{flute_Q=08_J=15_sc=time_wvlt=morlet_it=100_gplt}%
\input{dog-bark_Q=08_J=15_sc=time_wvlt=morlet_it=100_gplt}
\caption{Re-synthesis from time scattering, $T=\SI{1486}{ms}$.}
\end{subfigure}
\begin{subfigure}{\linewidth}
\input{speech_Q=08_J=15_sc=time-frequency_wvlt=morlet_it=100_gplt}%
\input{flute_Q=08_J=15_sc=time-frequency_wvlt=morlet_it=100_gplt}%
\input{dog-bark_Q=08_J=15_sc=time-frequency_wvlt=morlet_it=100_gplt}
\caption{Re-synthesis from time-frequency scattering, $T=\SI{1486}{ms}$.}
\end{subfigure}
\begin{subfigure}{\linewidth}
\input{speech_Q=08_J=16_sc=time_wvlt=morlet_it=100_gplt}%
\input{flute_Q=08_J=16_sc=time_wvlt=morlet_it=100_gplt}%
\input{dog-bark_Q=08_J=16_sc=time_wvlt=morlet_it=100_gplt}
\caption{Re-synthesis from time scattering, $T=\SI{2972}{ms}$.}
\end{subfigure}
\begin{subfigure}{\linewidth}
\input{speech_Q=08_J=16_sc=time-frequency_wvlt=morlet_it=100_gplt}%
\input{flute_Q=08_J=16_sc=time-frequency_wvlt=morlet_it=100_gplt}%
\input{dog-bark_Q=08_J=16_sc=time-frequency_wvlt=morlet_it=100_gplt}
\caption{Re-synthesis from time-frequency scattering, $T=\SI{2972}{ms}$.}
\end{subfigure}
\begin{subfigure}{\linewidth}
\input{speech_mcdermott_gplt}%
\input{flute_mcdermott_gplt}%
\input{dog-bark_mcdermott_gplt}
\caption{Re-synthesis using McDermott and Simoncelli \cite{mcdermott}.}
\end{subfigure}
\caption{
\label{fig:resynthesis}
Scalograms of audio re-synthesis using time scattering, time-frequency scattering, and McDermott and Simoncelli \cite{mcdermott}. Synthesis is performed at various time scales $T$ and inputs: spoken English (left), solo flute (middle), and dog barks (right).
}
\end{figure}

Figure \ref{fig:resynthesis} shows the synthesized scalograms of three sounds for various values of $T$.
Here, time-frequency scattering outperforms time scattering for $T$ greater than $\SI{1}{s}$.
Again, we do not expect these synthesized signals to reproduce the originals in the top row due to the imposed time-shift invariance 
In particular, speech is more intelligible due to better reconstruction of articulations, individual notes in a musical scale are more salient, and broadband impulses such as dog barks keep their typical amplitude envelopes and inter-onset intervals.
Compared to the representation of McDermott and Simoncelli \cite{mcdermott}, time-frequency scattering achieves similar quality, but does not have the same sensitivity to outliers.
Indeed, the contractivity of the wavelet transform and the modulus ensures the scattering transform's robustness to additive noise \cite{stephane,dss}.

\section{Supervised Classification}
\label{sec:classif}

We evaluate the performance of the joint time-frequency scattering transform on various classification tasks.
It is shown to enjoy significantly greater accuracy compared to baseline MFCC and time scattering approaches.
In fact, the proposed transform performs comparably to state-of-the-art learned convolutional networks whose training requires significant computational resources and large training sets.
As a result, the joint scattering transform provides a good alternative when such an expensive training step is infeasible or undesirable.

\subsection{Frequency Transposition Invariance}
\label{sec:transp}

In addition to time-shifting and time-warping, signals are also transformed by frequency-shifting and frequency-warping.
Frequency-shifting, also known as frequency transposition, changes the pitch but leaves subband envelopes intact.
This shifts the scalogram $\X(t, \la)$ by a fixed amount $\eta$ in log-frequency, giving $\X(t, \la-\eta)$.
While certain tasks are sensitive to pitch, like speaker identification, others, like speech recognition in non-tonal languages, require invariance to transposition.

The time scattering transform is rendered transposition invariant and stable to frequency-warping by applying a second scattering transform along log-frequency $\la$.
The result is the separable time and frequency scattering transform, introduced in And\'en and Mallat \cite{dss}.
Note that we may skip the averaging step of this second scattering transform.
Indeed, the averaging step is a linear map that can be learned by the classifier given enough training data \cite{dss}.

To render the joint time-frequency scattering transform transposition invariant, we similarly apply a second scattering transform along $\la$ for the first-order coefficients $\tfS_1 \x$.
For the second order $\tfS_2 \x$, however, we simply average along $\la$, since the two-dimensional wavelet decomposition already captures the relevant frequency structure.
The resulting representation then has the necessary transposition invariance and frequency-warping stability properties.
Again, if the training set is large enough, the final averaging steps can be learned by the classifier.

\subsection{Phone Segment Classification}
\label{sec:timit}

An individual phone in speech is short, on average $40~\mathrm{ms}$ in duration.
For phone identification, we therefore require the invariance scale $T$ to be of this order.
Since $T$ is small, there is less room for the type of misalignment seen in Section \ref{sec:tf-loss}.
We therefore expect the joint time-frequency scattering to provide only limited improvement over time scattering.

To evaluate, we use the TIMIT dataset, which contains recordings of spoken phrases, each labeled with its constituent phones and their locations \cite{timit}.
Given a phone segment, we wish to classify the phone according to the standard protocol \cite{kflee,clarkson-moreno}.
This task is simpler than continuous speech recognition, but provides a good framework for evaluating representations.
The training and evaluation sets consist of $3696$ and $192$ phrases, respectively.
We use a $400$--phrase validation set to optimize hyperparameters (see And\'{e}n and Mallat \cite{dss}).

Instead of the raw scattering transform, we use their logarithm, known as the log-scattering transform, as input to the classifier \cite{dss}.
We compute these coefficients over $192~\mathrm{ms}$ intervals centered on each segment with $T = 32~\mathrm{ms}$.
All coefficients are concatenated into a single vector together with the logarithm of the segment duration.
This vector is then used for classification.
The same processing is also performed for separable time and frequency scattering as well as joint time-frequency scattering.
We set the maximum frequency scale $F$ to $4$ octaves.
As a baseline, we compute Delta-MFCCs, which supplement standard MFCCs with first and second time derivatives \cite{furui}.
These are computed with the same windows and concatenation as the log-scattering coefficients.

For each representation, we train a support vector machine (SVM) \cite{cortes-vapnik} with a Gaussian kernel.
Here and in the following, we use a modified implementation of the LIBSVM library \cite{libsvm}.

\begin{table}
\begin{center}
\begin{tabular}{|l|c|}
\hline
Representation & Error (\%) \\
\hline
Delta-MFCCs & $18.3$ \\
State of the art \cite{ratajczak2017frame} & $11.9$ \\
\hline
Time scattering & $17.3$ \\
Separable time and freq.
scattering & $16.1$ \\
Joint time-freq. scattering & $15.7$ \\
\hline
\end{tabular}
\end{center}
\caption{
\label{table:timit}
Error rates for phone segment classification.
All representations are computed with $T = 32~\mathrm{ms}$ and $Q = 8$.
}
\end{table}

Results are shown in Table \ref{table:timit}.
Delta-MFCCs have an error rate of $18.3\%$, while the state-of-the-art representation, a convolutional neural network, achieves $11.9\%$ \cite{ratajczak2017frame}.
The time scattering transform obtains an error rate of $17.3\%$, which is improved by scattering along log-frequency to give $16.1\%$.
Finally, we obtain an error of $15.7\%$ for the joint time-frequency scattering transform.
As mentioned earlier, the amount of time-frequency structure present in an individual phone is small, but there is enough to give a small improvement to the joint transform.
This is partly due to the fact that certain phones (such as plosives) are characterized by their onset, which exhibits sophisticated time-frequency structure.

For this task, the joint time-frequency scattering transform does not outperform the state-of-the-art learned convolutional network.
Note, however, that the only learning involved for the scattering transform is training the SVM.
The scattering network weights are fixed, providing a simpler representation with acceptable performance.
Another important difference is that the state-of-the-art result was obtained by simultaneously estimating the labels for all phone segments in an utterance.
As a result, this network has access to more context about each segment that it can use to improve classification performance.
Combining these two approaches--a scattering transform as input to a more adaptive deep neural network--could yield even better performance as fewer parameters need to be estimated.
Indeed, replacing mel-spectrograms by scattering transforms in deep neural networks have improved performance for several tasks \cite{peddinti-dss,zeghidour,oyallon}.

\subsection{Musical instrument classification}
\label{sec:medleydb}

The timbre of a musical instrument is essentially determined by its shape and materials.
Both remain constant during a musical performance.
Therefore, musical instruments may be modeled as dynamical systems with constant parameters.
The task of musical instrument classification is to retrieve these parameters while remaining invariant to changes in pitch, intensity, and expressive technique induced by the performer.

In a musical instrument, the response of the vibrating body to an excitation is typically nonlinear.
As a result, sharp onsets produce distinctive time-frequency patterns which are not adequately captured by short-term audio descriptors operating on scales $T \approx \SI{20}{\milli\second}$, a typical window size for MFCCs.
Joint time-frequency scattering, on the other hand, captures such patterns up to the scale $T \approx \SI{3}{\second}$ of a short musical phrase.

To illustrate this, we apply it to automatic instrument classification in solo phrases with a taxonomy of eight instruments.
In line with the cross-collection methodology of Bogdanov et al. \cite{bogdanov}, we train and validate all models on the MedleyDB v1.1 dataset \cite{Bittner2014} and test them on the solosDb dataset \cite{Joder2009}.
This is the evaluation setting of Lostanlen and Cella \cite{spiralnetworks}.

Results are shown in Table \ref{table:medleydb}.
It appears that all models which do not explicitly decompose in both time and log-frequency (Delta-MFCCs, time scattering, and a convolutional network of temporal convolutions on the scalogram) perform comparably, with errors around $38\%$.
Introducing decompositions along the log-frequency axis through time-frequency convolutional networks and spiral convolutional networks, we obtain error rates of $28.3\%$ and $26.0$, respectively \cite{spiralnetworks}.
The improvement likely stems from the fact that musical instruments carry important discriminative information in the temporal evolution of their spectral envelopes as well as frequency modulation structures, both of which are captured by joint decompositions in time and log-frequency.
The joint time-frequency scattering transform further reduces the error to $22.0\%$.
The small size of the training set makes optimizing a convolutional network difficult, which may partially explain the improved accuracy of the joint scattering transform compared to the fully learned convolutional networks.

\begin{table}
\begin{center}
\begin{tabular}{|l|c|}
\hline
Representation & Error (\%) \\
\hline
Delta-MFCCs & $39.3$ \\
Time convolutional networks & $38.2$ \\
Time-frequency convolutional networks & $28.3$ \\
Spiral convolutional networks \cite{spiralnetworks} & $26.0$ \\
\hline
Time scattering  & $38.0$ \\
Time-frequency scattering & $22.0$ \\
\hline
\end{tabular}
\end{center}
\caption{\label{table:medleydb}
Error rates for musical instrument classification.
All representations are computed with $T = 3~\mathrm{s}$, $Q = 12$, and $F = 4$ octaves.
}
\end{table}

\subsection{Acoustic Scene Classification}
\label{sec:esc}

Environmental sounds and acoustic scenes are characterized by larger-scale time-frequency structures.
These recordings typically stretch over several seconds, each composed of shorter sound events which characterize the scene.
This could be birdsong in a park, car horns in a street, or the scraping of chairs in a caf\'{e}.
To differentiate between different sequences of such events, we must characterize longer-range structures.
As discussed above, this is not possible using standard representations, such as MFCCs or time scattering, which do not adequately capture time-frequency structure.

We evaluate the joint scattering transform on three acoustic scene datasets: UrbanSound8K (US8K) \cite{us8k}, ESC-50 \cite{esc50}, and DCASE2013 \cite{dcase2013}.
US8K and DCASE2013 have $10$ classes each, while ESC-50 contains $50$ classes, ranging from gun shots and subway stations to crying babies and supermarkets.
Both US8K and ESC-50 contain several thousand recordings of approximate duration $4~\mathrm{s}$.
DCASE2013, on the other hand, contains $100$ (public) training samples and $100$ (private) evaluation samples, each of duration $30~\mathrm{s}$.
All recordings being relatively long, they may exhibit sophisticated time-frequency structures that are discriminative for classification.

For US8K and ESC-50, we compute scattering transforms with $Q = 8$ and $T = 4~\mathrm{s}$.
We choose a large value for $T$ because there are long, texture-like structures in this dataset that we would like to characterize.
To ensure some transposition invariance, we explicitly average the separable and joint transforms over $F = 1$ octave (US8K) or $F = 2$ octaves (ESC-50).
Here, we do not want to choose a large frequency scale $F$ since some pitch information is necessary to distinguish certain sounds.
Since $T$ equals the clip duration, each clip yields a single scattering vector, which is fed into the classifier.

For DCASE2013, we compute scattering transforms with $Q = 4$, $T = 1.5~\mathrm{s}$, and frequency averaging over $F = 8$ octaves where applicable.
We must select parameters different from those of US8K and ESC-50 due to the much smaller size of DCASE2013.
Choosing smaller values for $Q$ and $T$ limits the complexity of the time-frequency structure captured by the transform, while choosing a large $F$ and averaging along frequency creates additional invariance to transposition.
Since $T$ is much smaller than the recording duration ($30~\mathrm{s}$), this yields multiple scattering vectors which are classified separately.
The overall class is then obtained by majority voting.

Delta-MFCCs are computed for all datasets as a baseline.
For each representation, we train a linear SVM with hyperparameters optimized by cross-validation on the training set.

The error for US8K and ESC-50 is calculated through cross-validation on pre-specified folds.
For these datasets, we use the data augmentation scheme of Salamon and Bello \cite{salamon-augment}, but without pitch-shifting, since transposition invariance is already enforced.
We calculate the DCASE2013 error on the evaluation subset in accordance with previous work \cite{aytar,arandjelovic-zisserman}.

\begin{table}
\begin{center}
\begin{tabular}{|l|c|c|c|}
\hline
Representation & US8K & ESC-50 & DCASE2013 \\
\hline
Delta-MFCCs \cite{us8k,esc50} & $46.0$ & $56.0$ & $42$ \\
Salamon and Bello \cite{salamon-augment} & $21.0$ & -- & -- \\
SoundNet \cite{aytar} & -- & $25.8$ & $12$ \\
$\mathrm{L}^3$ network \cite{arandjelovic-zisserman} & -- & $20.7$ & $7$ \\
\hline
Time scatt. & $26.9 \pm 4.1$ & $39.3 \pm 2.2$ & $12$ \\
Separable time and freq. scatt. & $22.8 \pm 3.0$ & $26.0 \pm 2.7$ & $6$ \\
Joint time-freq. scatt. & $19.6 \pm 2.9$ & $21.8 \pm 2.0$ & $5$ \\
\hline
\end{tabular}
\end{center}
\caption{
\label{table:esc}
Average and standard deviation of error rates for scene classification on US8K, ESC-50, and DCASE2013.
}
\end{table}

Results are shown in Table \ref{table:esc}.
The Delta-MFCCs have error rates of $46.0\%$, $56.0\%$, and $42\%$ for US8K, ESC-50, and DCASE2013, respectively.
State-of-the-art convolutional networks, on the other hand, obtain $21.0\%$, $20.7\%$ and $7\%$.

The standard time scattering transform yields accuracies of $26.8\%$ (US8K), $39.3\%$ (ESC-50), and $12\%$ (DCASE2013), improving on Delta-MFCCs by better capturing the temporal structure of each subband.
Adding a scattering transform along the log-frequency axis improves results to $22.8\%$ (US8K), $26.0\%$ (ESC-50), and $6\%$ (DCASE2013).
This improvement is expected since these sounds exhibit significant pitch variability which is not discriminative to each class.

The joint time-frequency scattering transform performs even better, giving errors of $19.6\%$ (US8K), $21.8\%$ (ESC-50), and $5\%$ (DCASE2013).
This is partly because environmental sounds are often characterized by dynamic filters which evolve in time, creating a spectrotemporal filter.
The mechanical and biological nature of these sounds also results in frequency modulation.
Both phenomena are examples of time-frequency geometry which are well-characterized by the joint scattering transform.
From a different perspective, the recordings in these datasets are sensitive to frequency-dependent time-shifts (see Section \ref{sec:tf-loss}).
Indeed, taking a signal with many transients, such as a jackhammer in a street scene, and misaligning its subbands yields a completely different sound.
A representation sensitive to such transformations is therefore expected to perform better.

Again, the joint scattering transform performs comparably to learned convolutional neural networks.
However, learned networks require significant computational resources to train and certain expertise in designing the network.
Both SoundNet and the $\mathrm{L}^3$ network are pretrained on large external datasets, requiring several days of computation on graphics processing units.
In contrast, the joint scattering transform has a fixed network structure, so the only training needed is for the SVM, requiring at most a few hours.
By considering the invariances of the problem (time-shifting, frequency transposition) and the structures we would like to capture (joint time-frequency geometry), we obtain good performance without costly pretraining.

\section{Conclusion}

We introduced a joint time-frequency scattering transform, a time-shift invariant descriptor with state-of-the-art classification performance for a wide range of audio datasets.
Important improvements are obtained for classification tasks involving large-scale signal structures.
Time-frequency scattering descriptors also recover complex signals including audio textures.

A joint time-frequency scattering has a computational structure similar to deep convolutional networks \cite{mallat2016understanding}, but is calculated with fixed wavelet filters.
It thus requires less training data to obtain accurate classification results.
However, when more training examples are available, learned convolutional networks provide state-of-the-art results.
Indeed, these networks adapt the representation to each classification problem.
Taking into account prior information on time-frequency geometry could help improve their performance.

\begin{appendices}
\section{}
\label{sec:appendix}

\begin{lemma}
\label{lem:chirp-wave}
Let $\tmwav_\la(t)$ be as defined in Theorem \ref{thm:fm}.
For $\x(t) = \exp(2\pi \imunit \, 2^{\alpha t})$, we then have
\begin{equation}
\label{eq:chirp-wave-approx-final}
|\x \conv \tmwav_\la|(t) = |\tmhwav \left( \log (2^\alpha) \, 2^{\alpha t-\la} \right)| + \varepsilon(t, \la)~,
\end{equation}
where
\begin{equation}
|\varepsilon(t, \la)| \le C |\alpha| 2^{-\la}
\end{equation}
for some constant $C > 0$ which only depends on $\tmwav(t)$.
\end{lemma}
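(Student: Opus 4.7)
The plan is to evaluate the convolution integral directly, linearize the chirp phase to recover the Fourier transform of $\tmwav$, and control the linearization error by splitting on the size of the emerging ``ridge parameter.''

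After substituting $w = 2^\la(t-u)$ in $\x\conv\tmwav_\la(t)$, the integral becomes
\[
\x\conv\tmwav_\la(t) = \int_\R \exp(2\pi\imunit\,\gamma\, e^{-\beta w})\,\tmwav(w)\,dw,
\]
where I abbreviate $\gamma := 2^{\alpha t}$ and $\beta := \alpha(\log 2)\,2^{-\la}$. Observe that $\gamma\beta = \log(2^\alpha)\,2^{\alpha t-\la}$, precisely the argument at which the claim evaluates $\tmhwav$. Writing $e^{-\beta w} = 1 - \beta w + R(w)$ with $R(w) = e^{-\beta w}-1+\beta w$, the integrand factorizes as $e^{2\pi\imunit\gamma}\,e^{-2\pi\imunit\gamma\beta w}\,e^{2\pi\imunit\gamma R(w)}$. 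Replacing the last factor by $1$ recovers exactly $e^{2\pi\imunit\gamma}\tmhwav(\gamma\beta)$ by definition of the Fourier transform, so
\[
\x\conv\tmwav_\la(t) = e^{2\pi\imunit\gamma}\bigl[\tmhwav(\gamma\beta) + E\bigr],
\]
with $E = \int (e^{2\pi\imunit\gamma R(w)}-1)\,e^{-2\pi\imunit\gamma\beta w}\,\tmwav(w)\,dw$; the reverse triangle inequality then yields $|\varepsilon(t,\la)|\le|E|$.

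Using $|e^{-x}-1+x|\le \tfrac12 x^2 e^{|x|}$ together with $|e^{\imunit x}-1|\le|x|$, the integrand of $E$ is pointwise bounded by $\pi\gamma\beta^2 w^2 e^{|\beta w|}|\tmwav(w)|$. I would then split on $|\gamma\beta|$. In the ridge regime $|\gamma\beta|\le M$, the compact support of $\tmwav$ controls $e^{|\beta w|}$ and yields $|E|\lesssim \gamma\beta^2 = |\gamma\beta|\cdot|\beta| = O(|\alpha|2^{-\la})$. Off the ridge $|\gamma\beta|>M$ the Taylor bound is too weak, but both $|\x\conv\tmwav_\la|(t)$ and $|\tmhwav(\gamma\beta)|$ are individually small: the former by one integration by parts against the nonstationary phase $\phi(w)=2\pi\gamma e^{-\beta w}$, whose derivative satisfies $|\phi'|\ge 2\pi|\gamma\beta|e^{-|\beta|R}$ on $\supp\tmwav$, giving an upper bound of order $(\|\tmwav'\|_1+|\beta|\|\tmwav\|_1)/|\gamma\beta|$; the latter by the decay of $\tmhwav$ implicit in the hypothesis $\int|u||\tmhwav(2^u)|\,du<\infty$ of Theorem \ref{thm:fm}. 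Summing these and choosing $M$ large enough that $1/M=O(|\beta|)$ closes the estimate for small $|\beta|$, while the trivial bound $|\varepsilon|\le 2\|\tmwav\|_1$ absorbs the range where $|\beta|$ is already $\Omega(1)$ into the constant.

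The hard part is precisely the non-uniformity of the Taylor bound: it scales as $|\gamma\beta|\cdot|\beta|$, so it only yields the desired $O(|\beta|)$ when $|\gamma\beta|$ stays bounded, whereas $\gamma=2^{\alpha t}$ is unbounded in $t$. The crucial observation that unlocks the uniform estimate is that far from the ridge $|\gamma\beta|\sim 1$ both the true integral and its Fourier surrogate $\tmhwav(\gamma\beta)$ are separately small, so the difference defining $\varepsilon$ is controlled by triangle inequality rather than by matching phases. Tracking that the final constant depends only on $\tmwav$ (through $\|\tmwav\|_1$, $\|w^2\tmwav\|_1$, $\|\tmwav'\|_1$, and the decay of $\tmhwav$) across the two regimes is the main bookkeeping challenge.
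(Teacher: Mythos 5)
Your decomposition of the phase is the same one the paper uses---both factor $\exp(2\pi\imunit\,2^{\alpha(t-u)})$ into a unimodular ``curvature'' term times $\exp(-2\pi\imunit\,u\log(2^\alpha)2^{\alpha t})$ so that $\tmhwav(\log(2^\alpha)2^{\alpha t-\la})$ emerges as the main term---but your error estimate does not close. The pointwise Taylor bound gives $|E|\lesssim \gamma\beta^2=|\gamma\beta|\cdot|\beta|$, which is $O(|\beta|)$ only when $|\gamma\beta|=O(1)$, so your ridge threshold must satisfy $M=O(1)$; your off-ridge bound is $O(1/|\gamma\beta|)$, which is $O(|\beta|)$ only when $|\gamma\beta|\gtrsim 1/|\beta|$. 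These two requirements on $M$ are incompatible for small $|\beta|$, and since $\gamma=2^{\alpha t}$ sweeps all of $(0,\infty)$, the middle regime $1\ll|\gamma\beta|\ll 1/|\beta|$ is nonempty and uncovered: the best you can extract there from $\min(|\gamma\beta|\,|\beta|,\,1/|\gamma\beta|)$ is $O(\sqrt{|\beta|})$, attained at $|\gamma\beta|\sim|\beta|^{-1/2}$, not the claimed $O(|\beta|)=O(|\alpha|2^{-\la})$. Further integrations by parts off the ridge only push the exponent to $|\beta|^{k/(k+1)}$ and require smoothness of $\tmwav$ beyond the stated hypotheses, so the two-regime scheme cannot be patched this way.

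The missing idea is to integrate by parts once \emph{on the error term itself}, against the surviving oscillation $e^{-2\pi\imunit\gamma\beta w}$, instead of bounding its integrand pointwise. This is exactly what the paper does: it writes the convolution as $\g(2^{-\la}\Delta)\,\tmhwav_\la(\log(2^\alpha)2^{\alpha t})-\int\g'(u)\I(u)\,\diff u$, where $\g$ is the unimodular curvature factor and $\I$ is the partial Fourier integral of $\tmwav_\la$. Then $|\g'(u)|\propto 2^{\alpha t}|\log(2^\alpha)|\,|1-2^{-\alpha u}|$ while $|\I(u)|\propto\big(2^{\alpha t}|\log(2^\alpha)|\big)^{-1}$, so the factors of $2^{\alpha t}\log(2^\alpha)$ cancel exactly, leaving a bound $\lesssim(1-2^{-|\alpha|2^{-\la}\Delta})\lesssim|\alpha|2^{-\la}$ that is uniform in $t$ with no case analysis. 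In your notation the same cancellation is available because $\gamma R'(w)=\gamma\beta(1-e^{-\beta w})=O(\gamma\beta^2|w|)$ divided by the phase derivative $\gamma\beta$ is $O(|\beta|\,|w|)$ independently of $\gamma$; replacing your pointwise bound on $E$ by this single integration by parts yields the uniform estimate and makes the ridge/off-ridge split unnecessary.
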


\begin{proof}
If $\alpha = 0$, $\x \conv \tmwav_\la(t) = \tmhwav (0) \exp(2\pi\imunit)$.
We therefore assume that $\alpha \neq 0$.
If $\supp \tmwav \subset [-\Delta, \Delta]$, we have
\begin{equation}
\label{eq:chirp-wave-integral}
    \x \conv \tmwav_\la(t) = \smashoperator{\int_{|u| \le 2^{-\la} \Delta}} \exp(2\pi \imunit \, 2^{\alpha(t-u)}) \tmwav_\la(u)\,\diff u~.
\end{equation}
For $u$ close to zero, the derivative of $2^{\alpha(t-u)}$ is approximately $-\log(2^\alpha) 2^{\alpha t}$.
We exploit this to integrate \eqref{eq:chirp-wave-integral} by parts.
Let $\g(u) = \exp(2\pi \imunit \, 2^{\alpha t} (2^{-\alpha u} + u \log(2^\alpha) ) )$.
We then have
\begin{align}
\nonumber
&\x \conv \tmwav_\la(t) = \smashoperator{\int_{|u| \le 2^{-\la} \Delta}} \g(u) \exp(-2\pi \imunit \, u \log (2^\alpha) 2^{\alpha t} ) \tmwav_\la(u)\,\diff u \\
\label{eq:chirp-wave-approx1}
&\quad = \g(2^{-\la} \Delta) \tmhwav_\la(\log (2^\alpha) 2^{\alpha t}) - \smashoperator{\int_{|u| \le 2^{-\la} \Delta}} \g'(u) \I(u)\,\diff u~,
\end{align}
where $\I(u) = \int_{-2^{-\la} \Delta}^u \exp(-2\pi \imunit \, v \log(2^\alpha) 2^{\alpha t}) \tmwav_\la(v)\,\diff v$.

The magnitude of the second term in \eqref{eq:chirp-wave-approx1} is bounded by
\begin{equation}
\label{eq:chirp-wave-err1}
2\pi \, 2^{\alpha t} |\log(2^{\alpha})| (1-2^{-|\alpha| 2^{-\la} \Delta}) 2^{1-\la} \Delta \max_{|u| \le 2^{-\la} \Delta} |\I(u)|~.
\end{equation}
Integrating $\I(u)$ by parts and taking the modulus gives
\begin{equation*}
|\I(u)| \le \frac{\|\tmwav_\la\|_\infty + \|\tmwav_\la'\|_1}{2\pi |\log(2^\alpha)|\,2^{\alpha t}} = 2^\la\frac{\|\tmwav\|_\infty + \|\tmwav'\|_1}{2\pi |\log (2^\alpha)|\, 2^{\alpha t}}~,
\end{equation*}
since $\alpha \neq 0$, $\|\tmwav_\la\|_\infty = 2^\la \|\tmwav\|_\infty$ and $\|\tmwav_\la'\|_1 = 2^\la \|\tmwav'\|_1$.
Plugging this into \eqref{eq:chirp-wave-err1}, we obtain
\begin{equation}
2 \Delta (1 - 2^{-|\alpha| 2^{-\la} \Delta}) (\|\tmwav\|_\infty + \|\tmwav'\|_1)~.
\end{equation}
Given that $1 - 2^{-u} < \log(2) u$ for all $u > 0$, this simplifies to
\begin{equation}
\label{eq:chirp-wave-err2}
2 \log(2) \Delta^2 (\|\tmwav\|_\infty + \|\tmwav'\|_1) \, |\alpha| 2^{-\la}~.
\end{equation}

Since $|\g(u)| = 1$, the modulus of the first term in \eqref{eq:chirp-wave-approx1} is $|\tmhwav_\la(\log (2^\alpha) 2^{\alpha t} )| = |\tmhwav(\log (2^\alpha) 2^{\alpha t - \la})|$.
The triangle inequality then establishes \eqref{eq:chirp-wave-approx-final} with $|\varepsilon(t, \la)|$ bounded by \eqref{eq:chirp-wave-err2}.
\end{proof}

\begin{lemma}
\label{lem:scal-decomp}
Define $\tmwav_\la(t)$, $\tfwav_{\mu,\ell,s}(t, \la)$, and $c_0$ as in Theorem \ref{thm:fm} and let
\begin{equation*}
t_0(\la) = \frac{\la}{\alpha} - \frac{\log\log 2^\alpha}{\log 2^\alpha}.
\end{equation*}
Given
\begin{equation*}
\Y(t, \la) = |\tmhwav\left( \log (2^\alpha)\,2^{\alpha t - \la} \right)|~,
\end{equation*}
its two-dimensional wavelet modulus decomposition satisfies
\begin{align}
\label{eq:scal-decomp-approx-final}
&|\Y \conv \tfwav_{\mu,\ell,s}| (t, \la) \\
\nonumber
&\quad = \frac{c_0|\tfwavtm_\mu|( t - t_0(\la) )}{\alpha} \left|\tfhwavfr\left(-\frac{s 2^{\mu-\ell}}{\alpha} \right)\right| + \varepsilon(t, \la, \mu, \ell, s)~,
\end{align}
where $|\tfwavtm_\mu|(t) = 2^\mu |\tfwavtm|(2^\mu t)$, and
\begin{equation*}
|\varepsilon(t, \la, \mu, \ell, s)| \le C(2^{2\mu} |\alpha| ^{-2} + 2^{2\mu-\ell} |\alpha|^{-2})~,
\end{equation*}
for some $C > 0$ depending only on $\tmwav(t)$, $\tfwavtm(t)$, $\tfwavfr(\la)$.
\end{lemma}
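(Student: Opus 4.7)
The key observation is that $\Y$ depends only on a single linear combination of its arguments: writing $g(\xi)=|\tmhwav(\log(2^\alpha)\,2^\xi)|$, one has $\Y(t',\la')=g(\alpha t'-\la')$. I would substitute this together with the separable wavelet $\tfwav_{\mu,\ell,s}(u,v)=2^{\mu+\ell}\,\tfwavtm(2^\mu u)\,\tfwavfr(s 2^\ell v)$ into the two-dimensional convolution and change variables $(u,v)\mapsto(u,w)$ via $w=v-\alpha u$ (Jacobian $1$). The argument of $g$ then becomes $\xi+w$ with $\xi=\alpha t-\la$, and the convolution factorizes as $\int g(\xi+w)\,I(w)\,dw$, where $I(w)$ is the remaining $u$-integral of the two wavelet factors.

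To evaluate $I(w)$, I would substitute $p=w+\alpha u$ and use $\tfwavtm(q)=|\tfwavtm|(q)\,e^{2\pi i q}$ to pull out an overall phase $e^{-2\pi i 2^\mu w/\alpha}$. Since $\tfwavfr(s 2^\ell p)$ is supported in $|p|\le 2^{-\ell}A$ while $p\mapsto|\tfwavtm_\mu|((p-w)/\alpha)$ is Lipschitz with constant at most $2^{2\mu}\|{\tfwavtm}'\|_\infty/|\alpha|$, replacing this envelope by $|\tfwavtm_\mu|(-w/\alpha)$ introduces a pointwise error of size $O(2^{2\mu-\ell}/\alpha^2)$, and the residual $p$-integral is a direct Fourier evaluation yielding $s\,2^{-\ell}\,\tfhwavfr(-s 2^{\mu-\ell}/\alpha)$. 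Thus, up to a unimodular factor,
\[ I(w)\approx \frac{s}{\alpha}\,\tfhwavfr(-s 2^{\mu-\ell}/\alpha)\,|\tfwavtm_\mu|(-w/\alpha)\,e^{-2\pi i 2^\mu w/\alpha}. \]
Substituting this back into $\int g(\xi+w)\,I(w)\,dw$ and rescaling $w=-\alpha y$ followed by $z=\xi-\alpha y$ rewrites the outer integral as $|\alpha|^{-1}\int g(z)\,\tfwavtm_\mu((\xi-z)/\alpha)\,dz$ times a unimodular prefactor.

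The final step uses that $|\tmhwav|$ is supported in $[2^{-1/Q},1]$: $g$ is compactly supported in an interval of bounded width around $z^\star=-\log_2\log 2^\alpha$, and the substitution $u=z+\log_2\log 2^\alpha$ inside the integral gives $\int g(z)\,dz=c_0$ directly. A second Lipschitz-plus-Taylor estimate then replaces $|\tfwavtm_\mu|((\xi-z)/\alpha)$ by $|\tfwavtm_\mu|(t-t_0(\la))$ (using the identity $(\xi-z^\star)/\alpha=t-t_0(\la)$) and the phase $e^{2\pi i 2^\mu(\xi-z)/\alpha}$ by its value at $z=z^\star$. Combined via the product rule, these yield the second $O(2^{2\mu}/\alpha^2)$ contribution to $\varepsilon$: the phase error of size $O(2^\mu/|\alpha|)$ is amplified by $\|\tfwavtm_\mu\|_\infty=2^\mu\|\tfwavtm\|_\infty$, and the envelope error by $c_0$. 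Taking the modulus at the end cancels all residual unimodular factors and produces the claimed main term. The main obstacle will be the bookkeeping: keeping both Lipschitz/Taylor estimates uniform in their respective integration variables, combining the phase and envelope errors correctly via the product rule, and verifying that all constants aggregate into a single $C$ depending only on $\tmwav$, $\tfwavtm$, and $\tfwavfr$.
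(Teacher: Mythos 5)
Your proposal is correct and reaches the same two error terms with the same scalings, but it organizes the computation differently from the paper. The paper proceeds by two sequential one-dimensional convolutions: first in $t$, where $\Y(\cdot,\la)$ acts as a near-delta concentrated at $t_0(\la)$ so that freezing $\tfwavtm_\mu$ at $t-t_0(\la)$ yields the factor $c_0\alpha^{-1}\tfwavtm_\mu(t-t_0(\la))$ with error $O(2^{2\mu}\alpha^{-2})$; then in $\la$, where factoring $\tfwavtm_\mu$ into envelope and phase and freezing only the envelope (using that $t_0$ is affine) turns the phase integral into an exact Fourier evaluation $\tfhwavfr(-s2^{\mu-\ell}/\alpha)$ with error $O(2^{2\mu-\ell}\alpha^{-2})$. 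You instead exploit the ridge structure $\Y(t,\la)=g(\alpha t-\la)$ up front and change variables so that the two-dimensional integral factors into a slice integral $I(w)$ of the wavelet along lines parallel to the ridge, followed by a one-dimensional convolution of the ridge profile $g$ against that slice. In your version the Fourier factor $\tfhwavfr(-s2^{\mu-\ell}/\alpha)$ emerges from the slicing step and $c_0$ from the final $z$-integral, i.e., the two factorizations are transposed relative to the paper; the two linearizations (envelope frozen over the support of $\tfwavfr$, and the full wavelet frozen at $z^\star$ against the concentration of $g$) are the same estimates in different coordinates, and your product-rule split into envelope and phase errors is equivalent to using $\|{\tfwavtm_\mu}'\|_\infty$ directly. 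The stray factor of $s$ in your expression for $I(w)$ is unimodular and disappears under the final modulus, so it is harmless.

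One imprecision worth fixing: you assert that $g$ is \emph{compactly supported} near $z^\star=-\log_2\log 2^\alpha$. Under the hypotheses of Theorem \ref{thm:fm}, $\tmwav$ has compact support in time, so $\tmhwav$ cannot be compactly supported in frequency; it is only \emph{concentrated} in $[2^{-1/Q},1]$. This does not break your argument, because the two places you use the claim need only $\int_\R g(z)\,\diff z = c_0$ (exact by the definition of $c_0$ after a change of variables) and $\int_\R g(z)\,|z-z^\star|\,\diff z<\infty$, which is precisely the hypothesis that $\int_\R |u|\,|\tmhwav(2^u)|\,\diff u$ is bounded. You should invoke that hypothesis explicitly rather than compact support.
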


\begin{proof}
Since $|\widehat \tmwav (\om) |$ is maximized at $\om = 1$, fixing $\la$, the maximum of $\Y(t, \la)$ is at $t_0(\la)$.
For small enough $\mu$, $\Y(t, \la)$ approximates a Dirac delta function centered at $t_0(\la)$.
We exploit this when convolving $\Y(t, \la)$ by $\tfwavtm_\mu(t)$.

Approximating $\tfwavtm(u)$ with its value at $u = t - t_0(\la)$ gives
\begin{align*}
&\Y(\cdot, \la) \conv \tfwavtm_\mu(t) = \int_\R \Y(t-u, \la) \tfwavtm_\mu(u)\,\diff u \\
&\quad = \int_\R | \widehat \tmwav ( \log (2^\alpha) 2^{\alpha (t-u) - \la} ) | \, \times \\
&\qquad (\tfwavtm_\mu(t-t_0(\la)) + \varepsilon_1(t, \la, \mu, u))\,\diff u~,
\end{align*}
where $|\varepsilon_1(t, \la, \mu, u)| \le |t-t_0(\la)-u| \, \|{\tfwavtm_\mu}'\|_\infty$.
Setting $\varepsilon_2(t, \la, \mu) = \int_\R | \widehat \tmwav ( \log (2^\alpha) 2^{\alpha (t-u) - \la} ) | \varepsilon_1(t, \la, \mu, u)\,\diff u$ gives
\begin{equation}
\label{eq:scal-decomp-approx1}
\Y(\cdot, \la) \conv \tfwavtm_\mu(t) = c_0 \alpha^{-1} \tfwavtm_\mu(t-t_0(\la)) + \varepsilon_2(t, \la, \mu)~,
\end{equation}
using a change of variables, where $c_0 = \int_\R |\widehat \tmwav(2^u)|\,\diff u < \infty$.

We bound $\varepsilon_2(t, \la, \mu)$ through
\begin{align*}
&|\varepsilon_2(t, \la, \mu)| \le \int_\R | \widehat \tmwav ( \log (2^\alpha) 2^{\alpha (t-u) - \la} ) | \, |\varepsilon_1(t, \la, \mu, u)|\,\diff u \\
\nonumber
&\quad \le \|{\tfwavtm_\mu}'\|_\infty \int_\R | \widehat \tmwav ( \log (2^\alpha) 2^{\alpha (t-u) - \la} ) | \, |t-t_0(\la)-u|\,\diff u~.
\end{align*}
The change of variables $t-t_0(\la)-u \mapsto \alpha^{-1} u$ now gives
\begin{equation}
\label{eq:scal-decomp-err1}
|\varepsilon_2(t, \la, \mu)| \le |\alpha|^{-2} 2^{2\mu} \|{\tfwavtm}'\|_\infty \int_\R | \tmhwav (2^u) |\, |u|\,\diff u,
\end{equation}
where we have used $\|{\tfwavtm_\mu}'\|_\infty = 2^{2\mu} \|{\tfwavtm}'\|_\infty$.

We now convolve \eqref{eq:scal-decomp-approx1} by $\tfwavfr_{\ell,s}(\la) = 2^\ell \tfwavfr(s2^\ell \la)$.
At high $\ell$, this wavelet will mostly capture phase variation.
To see this, we factorize $\tfwavtm_\mu(t)$ into an envelope and a phase, yielding $|\tfwavtm_\mu|(t) \exp(2\pi \imunit \, 2^\mu t))$.
The convolution then becomes
\begin{align}
\nonumber
& c_0 \alpha^{-1} \tfwavtm_\mu(t-t_0(\cdot)) \conv \tfwavfr_{\ell,s}(\la) \\
\nonumber
& \quad = c_0\alpha^{-1} \int_\R |\tfwavtm_\mu|(t-t_0(\la-\gamma)) \, \times \\
\label{eq:scal-decomp-approx11}
& \quad\quad \exp(2\pi \imunit \, 2^\mu(t-t_0(\la-\gamma))) \tfwavfr_{\ell,s}(\gamma)\,\diff\gamma~.
\end{align}
We now make the approximation
\begin{equation*}
|\tfwavtm_\mu|(t-t_0(\la-\gamma)) = |\tfwavtm_\mu|(t-t_0(\la)) + \varepsilon_3(t, \la, \mu, \gamma)~,
\end{equation*}
where $|\varepsilon_3(t, \la, \mu, \gamma)| \le \| |\tfwavtm_\mu|' \|_\infty |t_0(\la-\gamma)-t_0(\la)|$.
Plugging this into \eqref{eq:scal-decomp-approx11}, we obtain
\begin{align}
\label{eq:scal-decomp-approx12}
& c_0 \alpha^{-1} |\tfwavtm_\mu|(t-t_0(\la)) \, \times \\
\nonumber
& \quad \int_\R \exp(2\pi \imunit \, 2^\mu(t-t_0(\la-\gamma))) \tfwavfr_{\ell,s}(\gamma)\,\diff\gamma + \varepsilon_4(t, \la, \mu, \ell, s)~,
\end{align}
where $\varepsilon_4(t, \la, \mu, \ell, s)$ equals
\begin{equation*}
c_0 \alpha^{-1} \int_\R \varepsilon_3(t, \la, \mu, \gamma) \exp(2\pi \imunit \, 2^\mu(t-t_0(\la-\gamma))) \tfwavfr_{\ell,s}(\gamma)\,\diff\gamma~.
\end{equation*}
Since $t_0(\la-\gamma) = t_0(\la) - \alpha^{-1}\gamma$, the first term in \eqref{eq:scal-decomp-approx12} is
\begin{equation}
\label{eq:scal-decomp-approx2}
c_0 \alpha^{-1} |\tfwavtm_\mu|(t-t_0(\la)) \euler^{\imunit 2^\mu(t-t_0(\la))} \tfhwavfr_{\ell,s}(-2^\mu \alpha^{-1})~.
\end{equation}

The same property of $t_0(\la)$ lets us bound $\varepsilon_4(t, \la, \mu, \ell, s)$ by
\begin{align}
\nonumber
&|\varepsilon_4(t, \la, \mu, \ell, s)| \le c_0 |\alpha|^{-2} \| |\tfwavtm_\mu|' \|_\infty \int_\R |\gamma| |\tfwavfr_{\ell,s}(\gamma)|\,\diff\gamma \\
\label{eq:scal-decomp-err2}
&\quad = c_0 |\alpha|^{-2} 2^{2\mu-\ell} \| |\tfwavtm|' \|_\infty \int_\R |\gamma| |\tfwavfr(\gamma)|\,\diff\gamma~,
\end{align}
which follows from change of variables and from $\| |\tfwavtm_\mu|' \|_\infty = 2^{2\mu} \| |\tfwavtm| '\|_\infty$.
We must also convolve $\varepsilon_2(t, \la, \mu)$ with $\tfwavfr_{\ell,s}(\la)$.
Since $\|\tfwavfr_{\ell,s}\|_1 = \|\tfwavfr\|_1$ for all $\ell, s$, we have
\begin{equation}
\label{eq:scal-decomp-err3}
|\varepsilon_2(t, \cdot, \mu) \conv \tfwavfr_{\ell,s}(\la)| \le \|\varepsilon_2(t, \cdot, \mu)\|_\infty \, \|\tfwavfr\|_1~.
\end{equation}

Combining \eqref{eq:scal-decomp-approx1} with \eqref{eq:scal-decomp-approx2} and taking the modulus yields \eqref{eq:scal-decomp-approx-final} since $\tfhwavfr_{\ell,s}(\om) = \tfhwavfr(s 2^{-\ell} \om)$, where the bound on $\varepsilon(t, \la, \mu, \ell, s)$ follows from \eqref{eq:scal-decomp-err1}, \eqref{eq:scal-decomp-err2}, \eqref{eq:scal-decomp-err3}, and the triangle inequality.
\end{proof}

\begin{proof}[Proof of Theorem \ref{thm:fm}]
Lemma \ref{lem:chirp-wave} gives
\begin{equation*}
\X(t, \la) = |x \conv \tmwav_\la|(t) = |\widehat \tmwav( \log (2^\alpha) 2^{\alpha t - \la} )| + \varepsilon_1(t, \la)~,
\end{equation*}
where $|\varepsilon_1(t, \la)| \le C_1 |\alpha| 2^{-\la}$ for some $C_1 > 0$.
We now convolve $\X(t, \la)$ with $\tfwavtm_\mu(t)$ in time $\tfwavfr_{\ell,s}(\la)$ in log-frequency and take the modulus.
Lemma \ref{lem:scal-decomp} approximates the convolution of the first term.
For the second term, we observe that
\begin{align*}
&\|\varepsilon_1(\cdot, \la) \conv \tfwavtm_\mu\|_\infty \le \|\varepsilon_1(\cdot, \la)\|_\infty \|\tfwavtm\|_1 \le C_2 |\alpha| 2^{-\la}~,
\end{align*}
for some $C_2 > 0$, since $\|\tfwavtm_\mu\|_1 = \|\tfwavtm\|_1$ for all $\mu$.
Now,
\begin{align*}
&\left| \varepsilon_1 \conv \tfwav_{\mu,\ell,s}(t, \la) \right| \le C_2 |\alpha| \int_\R 2^{-(\la-\mu)} |\tfwavfr_{\ell,s}(\mu)|\,\diff\mu \\
&\quad = C_2 |\alpha| 2^{-\la} \int_{|\mu| \le 2^{-\ell} A} 2^\mu |\tfwavfr_{\ell,s}(\mu)|\,\diff\mu \\
&\quad \le C_2 |\alpha| 2^{-\la} 2^{2^{-\ell} A} \|\tfwavfr_{\ell,s}\|_1 = C_3 |\alpha| 2^{-\la+2^{-\ell}A}~,
\end{align*}
for some $C_3 > 0$, since $\tfwavfr$ is supported on $[-A, A]$.

As a result,
\begin{align}
\label{eq:fm-wavemod-approx}
&|\X \conv \tfwav_{\mu,\ell,s}(t, \la)| \\
\nonumber
&\quad = \frac{c_0}{\alpha} |\tfwavtm_\mu|(t-t_0(\la)) \left|\tfhwavfr\left(-\frac{s 2^{\mu-\ell}}{\alpha}\right)\right| + \varepsilon_2(t, \la, \mu, \ell, s)~,
\end{align}
where
\begin{equation*}
|\varepsilon_2(t, \la, \mu, \ell, s)| \le C(|\alpha| 2^{-\la+2^{-\ell}A} + |\alpha|^{-2} 2^{2\mu} + |\alpha|^{-2} 2^{2\mu-\ell})~.
\end{equation*}
Since this bound is constant in $t$ and $\|\tmlow_T\|_1 = \|\tmlow\|_1$ for all $T$, it still holds after convolving \eqref{eq:fm-wavemod-approx} with $\tmlow_T(t)$ .
\end{proof}

\end{appendices}

\section*{Acknowledgments}

The authors would like to thank J. Salamon for sharing his data augmentation code and A. Barnett for helpful discussions on oscillatory integrals.

\bibliographystyle{IEEEtran}
\bibliography{refs}

\begin{thebibliography}{10}
\providecommand{\url}[1]{#1}
\csname url@samestyle\endcsname
\providecommand{\newblock}{\relax}
\providecommand{\bibinfo}[2]{#2}
\providecommand{\BIBentrySTDinterwordspacing}{\spaceskip=0pt\relax}
\providecommand{\BIBentryALTinterwordstretchfactor}{4}
\providecommand{\BIBentryALTinterwordspacing}{\spaceskip=\fontdimen2\font plus
\BIBentryALTinterwordstretchfactor\fontdimen3\font minus
  \fontdimen4\font\relax}
\providecommand{\BIBforeignlanguage}[2]{{%
\expandafter\ifx\csname l@#1\endcsname\relax
\typeout{** WARNING: IEEEtran.bst: No hyphenation pattern has been}%
\typeout{** loaded for the language `#1'. Using the pattern for}%
\typeout{** the default language instead.}%
\else
\language=\csname l@#1\endcsname
\fi
#2}}
\providecommand{\BIBdecl}{\relax}
\BIBdecl

\bibitem{davis-mermelstein}
S.~Davis and P.~Mermelstein, ``Comparison of parametric representations for
  monosyllabic word recognition in continuously spoken sentences,'' \emph{IEEE
  Trans. Acoust., Speech, Signal Process.}, vol.~28, no.~4, pp. 357--366, 1980.

\bibitem{hermansky-modulation}
H.~Hermansky, ``The modulation spectrum in the automatic recognition of
  speech,'' in \emph{Proc. ASRU}.\hskip 1em plus 0.5em minus 0.4em\relax IEEE,
  Dec 1997, pp. 140--147.

\bibitem{thompson-atlas}
J.~Thompson and L.~Atlas, ``A non-uniform modulation transform for audio coding
  with increased time resolution,'' in \emph{Proc. ICASSP}, vol.~5.\hskip 1em
  plus 0.5em minus 0.4em\relax IEEE, 2003, pp. 397--400.

\bibitem{correlogram}
M.~Slaney and R.~Lyon, ``On the importance of time--{A} temporal representation
  of sound,'' in \emph{Visual representations of speech signals}, S.~B.
  M.~Cooke and M.~Crawford, Eds.\hskip 1em plus 0.5em minus 0.4em\relax Wiley,
  1993, pp. 95--116.

\bibitem{patterson-auditory}
R.~D. Patterson, ``Auditory images: {H}ow complex sounds are represented in the
  auditory system,'' \emph{Acoustical Science and Technology}, vol.~21, no.~4,
  pp. 183--190, 2000.

\bibitem{lecun1998gradient}
Y.~Le{C}un, L.~Bottou, Y.~Bengio, and P.~Haffner, ``Gradient-based learning
  applied to document recognition,'' \emph{Proc. IEEE}, vol.~86, no.~11, pp.
  2278--2324, Nov 1998.

\bibitem{schmidhuber2015deep}
J.~Schmidhuber, ``Deep learning in neural networks: {A}n overview,''
  \emph{Neural Networks}, vol.~61, pp. 85--117, 2015.

\bibitem{lecun2015deep}
Y.~Le{C}un, Y.~Bengio, and G.~Hinton, ``Deep learning,'' \emph{Nature}, vol.
  521, pp. 436--444, May 2015.

\bibitem{graves}
A.~Graves, A.-R. Mohamed, and G.~Hinton, ``Speech recognition with deep
  recurrent neural networks,'' in \emph{Proc. ICASSP}.\hskip 1em plus 0.5em
  minus 0.4em\relax IEEE, 2013, pp. 6645--6649.

\bibitem{cnn}
Y.~LeCun, K.~Kavukcuoglu, and C.~Farabet, ``Convolutional networks and
  applications in vision,'' in \emph{Proc. ISCS}.\hskip 1em plus 0.5em minus
  0.4em\relax IEEE, 2010.

\bibitem{relu}
V.~Nair and G.~E. Hinton, ``Rectified linear units improve restricted
  {B}oltzmann machines,'' in \emph{Proc. ICML}, 2010, pp. 807--814.

\bibitem{stephane}
S.~Mallat, ``Group invariant scattering,'' \emph{Comm. Pure Appl. Math.},
  vol.~65, no.~10, pp. 1331--1398, 2012.

\bibitem{dss}
J.~And\'en and S.~Mallat, ``Deep scattering spectrum,'' \emph{IEEE Trans.
  Signal Process.}, vol.~62, pp. 4114--4128, 2014.

\bibitem{emb}
V.~Chud\'a\v{c}ek, J.~And\'en, S.~Mallat, P.~Abry, and M.~Doret, ``Scattering
  transform for intrapartum fetal heart rate variability fractal analysis: {A}
  case-control study,'' \emph{IEEE Trans. Biomed. Eng.}, vol.~61, no.~4, pp.
  1100--1108, 2014.

\bibitem{talmon2015manifold}
R.~Talmon, S.~Mallat, H.~Zaveri, and R.~R. Coifman, ``Manifold learning for
  latent variable inference in dynamical systems,'' \emph{IEEE Trans. Signal
  Process.}, vol.~63, no.~15, pp. 3843--3856, 2015.

\bibitem{sulam2017dynamical}
J.~Sulam, Y.~Romano, and R.~Talmon, ``Dynamical system classification with
  diffusion embedding for {ECG}-based person identification,'' \emph{Signal
  Processing}, vol. 130, pp. 403--411, 2017.

\bibitem{shihab}
T.~Chi, P.~Ru, and S.~Shamma, ``Multiresolution spectrotemporal analysis of
  complex sounds,'' \emph{J. Acoust. Soc. Am.}, vol. 118, no.~2, pp. 887--906,
  2005.

\bibitem{mesgarani}
N.~Mesgarani, M.~Slaney, and S.~Shamma, ``Discrimination of speech from
  nonspeech based on multiscale spectro-temporal modulations,'' \emph{IEEE
  Trans. Audio, Speech, Lang. Process.}, vol.~14, no.~3, pp. 920--930, 2006.

\bibitem{constant-Q}
J.~C. Brown, ``An efficient algorithm for the calculation of a constant {Q}
  transform,'' \emph{J. Acoust. Soc. Am.}, vol.~92, no.~5, p. 2698, 1992.

\bibitem{dau}
T.~Dau, B.~Kollmeier, and A.~Kohlrausch, ``Modeling auditory processing of
  amplitude modulation. {I}. {D}etection and masking with narrow-band
  carriers,'' \emph{J. Acoust. Soc. Am.}, vol. 102, no.~5, pp. 2892--2905,
  1997.

\bibitem{lewicki}
E.~C. Smith and M.~S. Lewicki, ``Efficient auditory coding,'' \emph{Nature},
  vol. 439, no. 7079, p. 978, 2006.

\bibitem{stephane-book}
S.~Mallat, \emph{A wavelet tour of signal processing}.\hskip 1em plus 0.5em
  minus 0.4em\relax Academic Press, 1999.

\bibitem{daubechies}
I.~Daubechies, \emph{Ten Lectures on Wavelets}.\hskip 1em plus 0.5em minus
  0.4em\relax SIAM, 1992.

\bibitem{logan}
B.~Logan, ``Mel frequency ceptral coefficients for music modeling,'' in
  \emph{Proc. ISMIR}, 2000.

\bibitem{somervuo-bird}
P.~Somervuo, A.~H\"{a}rm\"{a}, and S.~Fagerlund, ``Parametric representations
  of bird sounds for automatic species recognition,'' \emph{IEEE Trans. Audio,
  Speech, Lang. Process.}, vol.~14, no.~6, pp. 2252--2263, Nov 2006.

\bibitem{esc50}
K.~J. Piczak, ``{ESC}: Dataset for environmental sound classification,'' in
  \emph{Proc. MM}.\hskip 1em plus 0.5em minus 0.4em\relax ACM, 2015, pp.
  1015--1018.

\bibitem{bammer-sampta}
M.~D\"orfler, R.~Bammer, and T.~Grill, ``Inside the spectrogram:
  {C}onvolutional neural networks in audio processing,'' in \emph{Proc.
  SampTA}, 2017, pp. 152--155.

\bibitem{dorfler-convnet}
M.~D\"{o}rfler, T.~Grill, R.~Bammer, and A.~Flexer, ``Basic filters for
  convolutional neural networks applied to music: {T}raining or design?'' 2017,
  submitted. arXiv:1709.02291.

\bibitem{papoulis}
A.~Papoulis, \emph{Signal Analysis}.\hskip 1em plus 0.5em minus 0.4em\relax
  McGraw-Hill Education, 1977.

\bibitem{irene}
I.~Waldspurger, ``Exponential decay of scattering coefficients,'' in
  \emph{Proc. SampTA}, 2017, pp. 143--146.

\bibitem{mcdermott}
J.~McDermott and E.~Simoncelli, ``Sound texture perception via statistics of
  the auditory periphery: {E}vidence from sound synthesis,'' \emph{Neuron},
  vol.~71, no.~5, pp. 926--940, 2011.

\bibitem{Schadler2012}
M.~Sch{\"a}dler, B.~Meyer, and B.~Kollmeier, ``Spectro-temporal modulation
  subspace-spanning filter bank features for robust automatic speech
  recognition,'' \emph{J. Acoust. Soc. Am.}, vol. 131, no.~5, pp. 4134--4151,
  2012.

\bibitem{Schadler2015}
M.~Sch{\"a}dler and B.~Kollmeier, ``Separable spectro-temporal {G}abor filter
  bank features: {R}educing the complexity of robust features for automatic
  speech recognition,'' \emph{J. Acoust. Soc. Am.}, vol. 137, no.~4, pp.
  2047--2059, 2015.

\bibitem{kleinschmidt}
M.~Kleinschmidt and D.~Gelbart, ``Improving word accuracy with {G}abor feature
  extraction.'' in \emph{Proc. Interspeech}, 2002, pp. 25--28.

\bibitem{Siedenburg2016}
K.~Siedenburg, I.~Fujinaga, and S.~McAdams, ``A comparison of approaches to
  timbre descriptors in music information retrieval and music psychology,''
  \emph{J. New Music Res.}, vol.~45, no.~1, pp. 27--41, 2016.

\bibitem{joan}
J.~Bruna and S.~Mallat, ``Invariant scattering convolution networks,''
  \emph{IEEE Trans. Pattern Anal. Mach. Intell.}, vol.~35, no.~8, pp.
  1872--1886, 2013.

\bibitem{laurent}
L.~Sifre and S.~Mallat, ``Rotation, scaling and deformation invariant
  scattering for texture discrimination,'' in \emph{Proc. CVPR}.\hskip 1em plus
  0.5em minus 0.4em\relax IEEE, 2013, pp. 1233--1240.

\bibitem{tuske2014acoustic}
Z.~T{\"u}ske, P.~Golik, R.~Schl{\"u}ter, and H.~Ney, ``Acoustic modeling with
  deep neural networks using raw time signal for {LVCSR},'' in \emph{Proc.
  Interspeech}, 2014, pp. 890--894.

\bibitem{sainath2015learning}
T.~N. Sainath, R.~J. Weiss, A.~Senior, K.~W. Wilson, and O.~Vinyals, ``Learning
  the speech front-end with raw waveform {CLDNNs},'' in \emph{Proc.
  Interspeech}, 2015, pp. 1--5.

\bibitem{fujieda2018wavelet}
S.~Fujieda, K.~Takayama, and T.~Hachisuka, ``Wavelet convolutional neural
  networks,'' \emph{arXiv preprint arXiv:1805.08620}, 2018.

\bibitem{oyallon2018compressing}
E.~Oyallon, E.~Belilovsky, S.~Zagoruyko, and M.~Valko, ``Compressing the input
  for {CNNs} with the first-order scattering transform,'' in \emph{Proc.
  ECCV}.\hskip 1em plus 0.5em minus 0.4em\relax Springer, 2018, pp. 305--320.

\bibitem{shi20173d}
C.~Shi and C.-M. Pun, ``{3D} multi-resolution wavelet convolutional neural
  networks for hyperspectral image classification,'' \emph{Information
  Sciences}, vol. 420, pp. 49--65, 2017.

\bibitem{liao2017wavelet}
Y.~Liao, X.~Zeng, and W.~Li, ``Wavelet transform based convolutional neural
  network for gearbox fault classification,'' in \emph{Prognostics and System
  Health Management Conference}.\hskip 1em plus 0.5em minus 0.4em\relax IEEE,
  2017, pp. 1--6.

\bibitem{hamel2013transfer}
P.~Hamel, M.~E.~P. Davies, K.~Yoshii, and M.~Goto, ``Transfer learning in
  {MIR}: {S}haring learned latent representations for music audio
  classification and similarity,'' in \emph{Proc. ISMIR}, 2013, pp. 9--15.

\bibitem{vandenoord2014transfer}
A.~van~den Oord, S.~Dieleman, and B.~Schrauwen, ``Transfer learning by
  supervised pre-training for audio-based music classification,'' in
  \emph{Proc. ISMIR}, 2014, pp. 29--34.

\bibitem{aytar}
Y.~Aytar, C.~Vondrick, and A.~Torralba, ``{SoundNet}: Learning sound
  representations from unlabeled video,'' in \emph{Proc. NIPS}, 2016, pp.
  892--900.

\bibitem{arandjelovic-zisserman}
R.~Arandjelovi{\'c} and A.~Zisserman, ``Look, listen and learn,'' in
  \emph{Proc. ICCV}.\hskip 1em plus 0.5em minus 0.4em\relax IEEE, 2017, pp.
  609--617.

\bibitem{richards2005fundamentals}
M.~A. Richards, \emph{Fundamentals of radar signal processing}.\hskip 1em plus
  0.5em minus 0.4em\relax McGraw-Hill Education, 2005.

\bibitem{Schwarz2011}
D.~Schwarz, ``State of the art in sound texture synthesis,'' in \emph{Proc.
  DAFx}, 2011, pp. 221--232.

\bibitem{bruna-texture}
J.~Bruna and S.~Mallat, ``Audio texture synthesis with scattering moments,''
  2013, unpublished. arXiv:1311.0407.

\bibitem{sutskever2013icml}
I.~Sutskever, J.~Martens, G.~Dahl, and G.~Hinton, ``On the importance of
  initialization and momentum in deep learning,'' in \emph{Proc. ICML}, 2013.

\bibitem{timit}
W.~Fisher, G.~Doddington, and K.~Goudie-Marshall, ``The {DARPA} speech
  recognition research database: specifications and status,'' in \emph{Proc.
  DARPA Workshop on Speech Recognition}, 1986, pp. 93--99.

\bibitem{kflee}
K.-F. Lee and H.-W. Hon, ``Speaker-independent phone recognition using hidden
  {M}arkov models,'' \emph{IEEE Trans. Acoust., Speech, Signal Process.},
  vol.~37, no.~11, pp. 1641--1648, 1989.

\bibitem{clarkson-moreno}
P.~Clarkson and P.~Moreno, ``On the use of support vector machines for phonetic
  classification,'' in \emph{Proc. IEEE Int. Conf. Acoust., Speech, Signal
  Process.}, vol.~2.\hskip 1em plus 0.5em minus 0.4em\relax IEEE, 1999, pp.
  585--588.

\bibitem{furui}
S.~Furui, ``Speaker-independent isolated word recognition using dynamic
  features of speech spectrum,'' \emph{IEEE Trans. Acoust., Speech, Signal
  Process.}, vol.~34, no.~1, pp. 52--59, 1986.

\bibitem{cortes-vapnik}
C.~Cortes and V.~Vapnik, ``Support-vector networks,'' \emph{Mach. Learn.},
  vol.~20, no.~3, pp. 273--297, 1995.

\bibitem{libsvm}
C.~Chang and C.~Lin, ``{LIBSVM}: A library for support vector machines,''
  \emph{ACM Trans. on Intell. Syst. and Technol.}, vol.~2, pp. 27:1--27:27,
  2011.

\bibitem{ratajczak2017frame}
M.~Ratajczak, S.~Tschiatschek, and F.~Pernkopf, ``Frame and segment level
  recurrent neural networks for phone classification,'' in \emph{Proc.
  Interspeech}, 2017, pp. 1318--1322.

\bibitem{peddinti-dss}
V.~Peddinti, T.~N. Sainath, S.~Maymon, B.~Ramabhadran, D.~Nahamoo, and V.~Goel,
  ``Deep scattering spectrum with deep neural networks,'' in \emph{Proc.
  ICASSP}.\hskip 1em plus 0.5em minus 0.4em\relax IEEE, 2014, pp. 210--214.

\bibitem{zeghidour}
N.~Zeghidour, G.~Synnaeve, M.~Versteegh, and E.~Dupoux, ``A deep scattering
  spectrum--deep siamese network pipeline for unsupervised acoustic modeling,''
  in \emph{Proc. ICASSP}.\hskip 1em plus 0.5em minus 0.4em\relax IEEE, 2016,
  pp. 4965--4969.

\bibitem{oyallon}
E.~Oyallon, E.~Belilovsky, and S.~Zagoruyko, ``Scaling the scattering
  transform: {D}eep hybrid networks,'' in \emph{Proc. ICCV}, 2017.

\bibitem{bogdanov}
D.~Bogdanov, A.~Porter, P.~Herrera, and X.~Serra, ``Cross-collection evaluation
  for music classification tasks,'' in \emph{Proc. ISMIR}, 2016, pp. 379--385.

\bibitem{Bittner2014}
R.~Bittner, J.~Salamon, M.~Tierney, M.~Mauch, C.~Cannam, and J.~Bello,
  ``{MedleyDB}: {A} multitrack dataset for annotation-intensive {MIR}
  research,'' in \emph{Proc. ISMIR}, 2014, pp. 155--160.

\bibitem{Joder2009}
C.~Joder, S.~Essid, and G.~Richard, ``Temporal integration for audio
  classification with application to musical instrument classification,''
  \emph{IEEE Trans. Audio, Speech, Lang. Process.}, vol.~17, no.~1, pp.
  174--186, 2009.

\bibitem{spiralnetworks}
V.~Lostanlen and C.-E. Cella, ``Deep convolutional networks on the pitch spiral
  for music instrument classification,'' in \emph{Proc. ISMIR}, 2016, pp.
  612--618.

\bibitem{us8k}
J.~Salamon, C.~Jacoby, and J.~P. Bello, ``A dataset and taxonomy for urban
  sound research,'' in \emph{Proc. MM}.\hskip 1em plus 0.5em minus 0.4em\relax
  ACM, 2014.

\bibitem{dcase2013}
D.~Stowell, D.~Giannoulis, E.~Benetos, M.~Lagrange, and M.~D. Plumbley,
  ``Detection and classification of acoustic scenes and events,'' \emph{IEEE
  Trans. Multimed.}, vol.~17, no.~10, pp. 1733--1746, 2015.

\bibitem{salamon-augment}
J.~Salamon and J.~P. Bello, ``Deep convolutional neural networks and data
  augmentation for environmental sound classification,'' \emph{IEEE Signal
  Process. Lett.}, vol.~24, no.~3, pp. 279--283, 2017.

\bibitem{mallat2016understanding}
S.~Mallat, ``Understanding deep convolutional networks,'' \emph{Phil. Trans. R.
  Soc. A}, vol. 374, no. 2065, p. 20150203, 2016.

\end{thebibliography}

\end{document}